\definecolor{DarkGreen}{rgb}{0.1,0.5,0.1}
\definecolor{DarkRed}{rgb}{0.5,0.1,0.1}
\definecolor{DarkBlue}{rgb}{0.1,0.1,0.5}
\newcommand{\ourcode}{Hermitian-Lifted Code}
\newcommand{\ourcodes}{Hermitian-Lifted Codes}
\newcommand{\cC}{\ensuremath{\mathcal{C}}}
\newcommand{\F}{{\mathbb F}}
\newcommand{\inset}[1]{\left\{#1\right\}}
\newcommand{\inparen}[1]{\left(#1\right)}
\newcommand{\suchthat}{\,:\,}
\newcommand{\supp}{\mathrm{Supp}}
\newcommand{\bin}{\mathrm{bin}}
\renewcommand{\Im}{\ensuremath{\operatorname{Im}}}
\renewcommand{\epsilon}{\varepsilon}
\newtheorem{theorem}{Theorem} 
\newtheorem{lemma}[theorem]{Lemma} 
\newtheorem{definition}{Definition}
\newtheorem{observation}[theorem]{Observation}
\newtheorem{remark}{Remark}
\newtheorem{claim}[theorem]{Claim}
\newtheorem{proposition}[theorem]{Proposition}
\newtheorem{fact}[theorem]{Fact}
\newtheorem{example}{Example}
\title{\ourcodes}
\author{
Hiram H. L\'{o}pez\thanks{ Department of Mathematics and Statistics, Cleveland State University.},
Beth Malmskog\thanks{Department of Mathematics and Computer Science, Colorado College.},
Gretchen Matthews\thanks{Department of Mathematics, Virginia Tech. Work partially supported by NSF DMS-1855136},  \\
Fernando Pi\~{n}ero-Gonz\'{a}lez\thanks{Department of Mathematics, University of Puerto Rico at Ponce.}, 
and Mary Wootters\thanks{Departments of Computer Science and Electrical Engineering, Stanford University.  Work partially supported by NSF CAREER award CCF-1844628, by CCF-BSF-1814629, and by a Sloan Research Fellowship.}
}
\date{}
\begin{document}

\maketitle

\begin{abstract}
In this paper, we construct codes for local recovery of erasures with high availability and constant-bounded rate from the Hermitian curve. These new codes, called Hermitian-lifted codes, are evaluation codes with evaluation set being the set of $\F_{q^2}$-rational points on the affine curve. The novelty is in terms of the functions to be evaluated;  they are a special set of monomials which restrict to low degree polynomials on lines intersected with the Hermitian curve. As a result, the positions corresponding to points on any line through a given point act as a recovery set for the position corresponding to that point. 
\end{abstract}
{\small \textbf{\textit{Keywords---} Hermitian curve  $\cdot$ Codes with availability $\cdot$ Locally recoverable codes $\cdot$ Algebraic geometry codes $\cdot$ Lifted codes}}\newline
{\small \textbf{\textit{Mathematics Subject Classification---} 94B05 $\cdot$ 11T71 $\cdot$ 94B27}}


\section{Introduction}\label{sec:intro}

Let $\cC \subset \F^n$ be a linear code
of length $n$ over a finite field $\F$.  
For a coordinate $i \in [n]$, we say that a set $R \subseteq [n] \setminus \{i\}$ is a \em recovery set \em for the index $i$ in the code $\cC$ if the $i$'th symbol $c_i$ of a codeword $c \in \cC$ can be recovered from the symbols $\inset{ c_j \suchthat j \in R}$.  
We say that $\cC$ has \em locality \em $r$ and \em availability \em $t$ if for each $i \in [n]$, there are $t$ disjoint repair sets for $i$ in $\cC$, each of size at most $r$.  

Constructing codes with locality and availability is desirable for several reasons. Codes with extremely large availability, $t = \Omega(n)$, are known to be equivalent to \em locally decodable codes \em (LDCs)~\cite{KT00,Woo10}, which are objects of interest in theoretical computer science, complexity theory, and cryptography; see \cite{yek_survey} for a survey.
Codes with smaller availability, $t = O(1)$, have been studied recently in the context of \em distributed storage: \em if data is encoded and distributed over  multiple nodes in a distributed system, then a small piece of data can be accessed efficiently by many users simultaneously.   There are also variants on codes with locality and availability, such as \em batch codes \em and \em PIR codes, \em with applications in cryptography and private information retrieval; we refer the reader to the survey~\cite{Ska16} for more details.
Finally, codes with intermediate availability---where $t$ is sublinear in $n$ but still growing---have been studied as a bridge between the two settings above, and as an interesting problem in itself~\cite{FischerGW17,LW18,PV19}.

In this work we introduce a new  type of \em lifted code. Lifted codes are \em a class of codes which have 
given rise to 
codes with good locality and intermediate availability.  Lifted codes, introduced by Guo, Kopparty and Sudan~\cite{GuoKS13}, are evaluation codes of multivariate polynomials over large fields.  The lift of a univariate evaluation code $\cC_0$ to $m$ variables is the evaluation code corresponding to the set of all $m$-variate polynomials whose restriction to every line corresponds to a codeword in $\cC_0$.  For example, the \em lifted Reed-Solomon code \em is the code corresponding to all $m$-variate polynomials whose restriction to every line is a low-degree univariate polynomial.  Surprisingly, the set of all such polynomials can be much larger than the corresponding Reed-Muller code (corresponding to multivariate polynomials of small total degree) when the characteristic of the base field is small, and can even have rate approaching one~\cite{GuoKS13}. Several variants of the lifting operation have been proposed, with the goal of making the construction more flexible, for example: working with different base codes~\cite{GuoKS13,BGKKS13,G15}; including derivative information about the polynomials~\cite{Wu15, LW18}; and only restricting to certain sets of lines~\cite{FischerGW17}. 

In this paper we introduce a novel variant on the lifted code construction, by considering evaluation codes not on $\left(\F_q\right)^n$, but rather on the rational points of the Hermitian curve $\mathcal{H}_q$ over $\F_{q^2}$.  That is, our code corresponds to all bivariate polynomials, evaluated on $\mathcal{H}_q$, so that the restriction to any line agrees with some low-degree univariate polynomial \em on the points of $\mathcal{H}_q$ intersected with that line. \em  We call such a code a \em \ourcode, \em because we are taking the lift with respect to a Hermitian curve.

For any even prime power $q$, \ourcodes\  have length $q^3$ with locality $q$ and availability $q^2-1$.  Just as with the example of lifted Reed-Solomon codes mentioned above---and perhaps just as surprisingly---these codes have rate much larger than one would expect.  More precisely, it is not hard to see that the code described above contains 
the one-point Hermitian code $C_{q,q^2 - 1}$ (see below for notation); but this one-point code has rate that tends to $0$ as $q$ tends to infinity.  Our main mathematical result is that in fact \ourcodes\  have rate bounded below by a positive constant \em independent \em of $q$. 

Evaluation codes on the points of geometric objects offer an elegant and flexible way to create codes with good parameters, bounded through geometry, with locality arising naturally from algebraic and geometric relationships.  The rich structure of certain curves and their associated function fields enable a wide variety of geometric and algebraic perspectives that might create excellent codes.  Our construction combines the extremely effective curve-centered approach that began in \cite{Goppa} and extended to locally recoverable codes in \cite{BargTamoVladut, HaymakerMalmskogMatthews} with the lifting perspective of \cite{GuoKS13} to obtain novel codes which are not special cases of either approach.

In summary, our contributions are as follows.
First, we introduce the notion of curve-lifted codes.  This is a novel approach that combines ideas from curve-based codes and lifted codes in order to obtain good locally recoverable codes. Specifically, the evaluation points are simply rational points on the curve, as in the one-point code case, while the collection of functions to be evaluated is expanded to obtain a better code rate while guaranteeing locality and availability. We instantiate this idea by studying \ourcodes.
While \ourcodes\ do offer good locality and availability, the quantitative parameters are not better than the existing state-of-the-art.  Rather, we view the primary contribution of this work as introducing a new paradigm:
we view our construction as a proof of concept that combining these two views can result in novel codes, providing an important addition to the literature and introducing an approach that could lead to new insights and may eventually improve the state-of-the-art.

Second, we provide a positive lower bound on the rate of \ourcodes\ as $q\to \infty$.  Such a bound is surprising, since the corresponding one-point Hermitian code has a rate that tends to zero as $q \to \infty$.  Our approach is to study the set of ``good'' monomials whose restriction to any line, intersected with the Hermitian curve, agrees with a low-degree polynomial.  We give a sufficient condition for a monomial to be good, and establish via a counting argument that there are many such monomials.  This establishes a lower bound on the dimension of the code.




For the rest of the Introduction, we review the Hermitian curve and related code constructions.
The rest of the paper, after the Introduction, is organized as follows.
Hermitian-Lifted Codes are introduced in Section \ref{sec:construction}, and information about their recovery sets is found there. The main theorem, providing the lower bound on the code rate, and proof are given in Section \ref{sec:mainthm}. Examples are given Section \ref{sec:examples}, and Section \ref{sec:conclusion} provides a brief conclusion. 

\subsection{The Hermitian Curve and Prior Code Constructions}

An algebraic geometric perspective has proven useful in coding theory, particularly in the case of evaluation codes using the points of curves (dimension 1 varieties) defined over finite fields.  These codes can be viewed as generalizations of Reed-Solomon codes, with the advantage that the length of the code is not bounded above by the field size, allowing for much longer codes with more codewords than a RS code over the same field.  The length of the evaluation code is bounded by the number of points on the curve over the given field; thus curves with as many points as possible over a given field are useful in this context. The Hermitian curve is extremal among maximal curves and is almost certainly the best-studied maximal curve of positive genus.

\subsubsection{The Hermitian curve}

For a curve $\mathcal{X}$ defined over a field $k$ and $K/k$ any field extension, let $\mathcal{X}(K)$ denote the set of points on $\mathcal{X}$ defined over $K$.  A curve's genus is a non-negative integer that is one measure of its complexity. The number of points possible for a curve over a finite field is limited by the Hasse-Weil bound by of the curve's genus and the field size as follows: for a smooth, projective $\mathcal{X}$ of genus $g$ defined over a finite field $\mathbb{F}_q$ of cardinality $q$, we have that 
\[q+1-2g\sqrt{q}\leq |\mathcal{X}(\mathbb{F}_q)|\leq q+1+2g\sqrt{q}.\]  A curve which obtains the upper bound over a given field is said to be \textit{maximal} over that field.

The Hermitian curve $\mathcal{H}_q$ is defined over $\mathbb{F}_q$ by the affine equation
\[ x^q +x =y^{q+1}.\]  This curve is smooth, irreducible, has genus $g=\frac{q(q-1)}{2}$, and has a single point at infinity, denoted by $P_{\infty}$.

Let $k$ be any natural number. We consider the points on $\mathcal{H}_q$ over the corresponding degree $k$ extension field of $\mathbb{F}_q$, given explicitly by
\[\mathcal{H}_q(\mathbb{F}_{q^k})=\{(x,y)\in \left(\mathbb{F}_{q^k}\right)^2 : x^q +x =y^{q+1}\}\cup\{P_{\infty}\}.\]  In this paper, we focus on the field $\mathbb{F}_{q^2}$.  Note that $\mathcal{H}_q$ has $q^3+1$ points over  $\mathbb{F}_{q^2}$, so $\mathcal{H}_q$ is maximal over $\mathbb{F}_{q^2}$.  

The Hermitian curve $\mathcal{H}_q$ is extremal in many ways: it is the unique curve with the largest possible genus for a curve maximal over the field $\mathbb{F}_{q^2}$, and thus is the maximal curve with the largest number of points for that field.  The curve is also as symmetrical as possible in that the automorphism group of $H_q$ is PGU($3, q^2$), which makes $\mathcal{H}_q$ the only curve of genus $g$ with automorphism group of order greater than $16g^4$ \cite{Stichtenoth1973}.  

This exceptional symmetry is apparent in the geometry of $\mathcal{H}_q$. The intersection of $\mathcal{H}_q$ with lines in the projective plane $\mathbb{P}^2$ will be very important to our construction. \begin{fact}\label{fact:lines}
Every line in $\mathbb{P}^2$ that is not tangent to $\mathcal{H}_q$ intersects $\mathcal{H}_q$ in exactly $q+1$ distinct places. Tangent lines to $\mathcal{H}_q$ intersect $\mathcal{H}_q$ in exactly one place \cite{HKT08}.
\end{fact}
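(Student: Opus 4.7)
The plan is to combine Bezout's theorem with a characteristic-$p$ computation specific to the Hermitian equation. Since $\mathcal{H}_q$ is a smooth projective plane curve of degree $q+1$, Bezout guarantees that every line $\ell \subset \mathbb{P}^2$ meets $\mathcal{H}_q$ in $q+1$ points counted with intersection multiplicity. At a smooth point $P$, a line $\ell$ has intersection multiplicity at least $2$ with $\mathcal{H}_q$ at $P$ precisely when $\ell$ is the tangent line to $\mathcal{H}_q$ at $P$. Consequently, if $\ell$ is not tangent at any point where it meets $\mathcal{H}_q$, each intersection is transverse and $\ell$ hits $\mathcal{H}_q$ in $q+1$ distinct points. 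This settles the first half of the fact, provided one can show that tangent lines pick up all of their Bezout multiplicity at a single point.

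The substantive task is thus to prove that every tangent line meets $\mathcal{H}_q$ only at the point of tangency. I would verify this by direct substitution, leaning on the fact that the characteristic divides $q$. Writing $F(x,y) = x^q + x - y^{q+1}$, the partials collapse to $F_x \equiv 1$ and $F_y \equiv -y^q$, so the tangent at an affine $\F_{q^2}$-rational point $(a,b)$ simplifies (after absorbing the constant via $b^{q+1} = a^q + a$) to
\[
x = b^q y - a^q.
\]
Substituting this into $F$, expanding via Frobenius $(u+v)^q = u^q + v^q$, and invoking $a^{q^2} = a$, $b^{q^2} = b$ together with the characteristic-$p$ identity $(y-b)^q = y^q - b^q$, one obtains the clean identity
\[
F\bigl(b^q y - a^q,\, y\bigr) \;=\; -(y-b)^{q+1},
\]
which has a single zero of order exactly $q+1$ at $y = b$. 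Hence the tangent line at $(a,b)$ meets $\mathcal{H}_q$ only at $(a,b)$, saturating the Bezout count there. The point at infinity $P_\infty = (1:0:0)$ is handled analogously in projective coordinates: the tangent is the line $Z=0$, and restricting the homogenization $X^q Z + X Z^q - Y^{q+1}$ to $Z = 0$ gives $-Y^{q+1}$, again a multiplicity-$(q+1)$ intersection concentrated at $P_\infty$.

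The main obstacle---and the algebraic miracle that makes the fact true---is the collapse of the substitution to $-(y-b)^{q+1}$. Bezout alone only guarantees that the total multiplicity $q+1$ is distributed somehow among intersection points; that all of it concentrates at a single point for every tangent line is a feature of the Hermitian curve's maximality (its so-called ``strangeness'' in projective geometry) and must be extracted from the very particular form of its equation, rather than from general intersection theory. Once this concentration is established, the second half of the fact follows, and the first half follows from the smooth-point dichotomy above.
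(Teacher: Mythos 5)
The paper gives no proof of Fact~\ref{fact:lines} (it is cited from \cite{HKT08}), so I will assess your argument on its own terms. Your tangent-line computation is correct: with $F(x,y)=x^q+x-y^{q+1}$, the tangent at a rational affine point $(a,b)$ is $x=b^qy-a^q$, and substituting gives
\[
F(b^qy-a^q,\,y)=b^{q^2}y^q-a^{q^2}+b^qy-a^q-y^{q+1}=-\bigl(y^{q+1}-by^q-b^qy+b^{q+1}\bigr)=-(y-b)^{q+1},
\]
using $a^{q^2}=a$, $b^{q^2}=b$, and $b^{q+1}=a^q+a$. The projective check at $P_\infty$ is also fine. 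This is the right way to establish the tangent half of the fact.

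The gap is in the non-tangent half. Bezout plus transversality shows only that a non-tangent line meets $\mathcal{H}_q$ in $q+1$ \emph{geometrically} distinct points over $\overline{\F}_q$. But the Fact is a statement about \emph{places} of the curve over $\F_{q^2}$, and the way the paper uses it (e.g.\ in Observation~\ref{obs_1pt} and Observation~\ref{obs:local}, where each recovery set is asserted to have exactly $q$ points in $\mathcal{H}_q(\F_{q^2})$) requires that \emph{all} $q+1$ intersection points of a non-tangent $\F_{q^2}$-rational line are $\F_{q^2}$-rational. That rationality does not follow from Bezout; it is a genuine extra property of the Hermitian curve, tied to its maximality over $\F_{q^2}$. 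A clean way to close the gap: for $L_{\alpha,\beta}$ the intersections correspond to roots of $p_{\alpha,\beta}(t)=t^{q+1}-\alpha^q t^q-\alpha t-(\beta+\beta^q)$. If $\sigma$ is a root, raising $p_{\alpha,\beta}(\sigma)=0$ to the $q$-th power (using $\alpha^{q^2}=\alpha$ and $\beta+\beta^q\in\F_q$) and subtracting yields $(\sigma^q-\alpha)(\sigma^{q^2}-\sigma)=0$. Since $p_{\alpha,\beta}'(t)=t^q-\alpha$ in characteristic $p\mid q$, the alternative $\sigma^q=\alpha$ forces $\sigma$ to be a repeated root, contradicting non-tangency; hence $\sigma\in\F_{q^2}$. (Equivalently, one can double-count incidences between the $q^3+1$ rational points and the $\F_{q^2}$-rational lines.) Without one of these arguments, your proof establishes the statement over $\overline{\F}_q$ but not the form the rest of the paper relies on. A related loose end worth flagging: over $\overline{\F}_q$ the tangent at a \emph{non}-rational point $(a,b)$ passes through its Frobenius conjugate $(a^{q^2},b^{q^2})$ (as one checks directly), so the ``exactly one point'' claim for tangent lines is only true once one restricts to $\F_{q^2}$-rational tangents, which are exactly the tangents at $\F_{q^2}$-rational points; you should state that restriction explicitly rather than leaving it implicit.
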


We consider only lines defined over $\mathbb{F}_{q^2}$ which do not pass through the point $P_{\infty}$, which can be parameterized by the affine equations $x=\alpha t + \beta$ and $y=t$ for $\alpha,\beta\in\mathbb{F}_{q^2}$.  Note that each line tangent to $\mathcal{H}_q$ at an affine point does not pass through $P_{\infty}$, so such a line is of the given form. Thus, for each $P$ a point of $\mathcal{H}_q(\mathbb{F}_{q^2})\setminus \{P_{\infty}\}$, there are exactly $q^2 -1$ distinct, non-tangent lines passing through $P$ and $q$ other affine points of the curve defined over $\mathbb{F}_{q^2}$.

\subsubsection{Evaluation codes and one-point codes on Hermitian curves}
V.D. Goppa first defined evaluation codes on curves over finite fields in the early 1980s \cite{Goppa}.  The basic idea is to choose a set of points on the curve $\mathcal{X}$ as evaluation points, and a disjoint set of points as the support of a pole divisor.  Codewords are created by evaluating functions which take on poles only in the support of the pole divisor on the evaluation points.  The simplest case of Goppa's construction is a \textit{one-point code}, where the pole divisor is $D=mP$ for some natural number $m$ and $P$ a point on $\mathcal{X}$.  More concretely, we use the following definition.

\begin{definition} Let $\mathcal{X}$ be a smooth curve defined over $\mathbb{F}_q$.  Let $P$ be a point on $\mathcal{X}(\mathbb{F}_q)$ and $m$ be a natural number.  Let $B=\{P_1, P_2, \dots, P_n\}$ be a set of points in $\mathcal{X}(\mathbb{F}_q)$ not containing $P$, and let $D$ be the divisor $D:=P_1+P_2+\dots +P_n$.  Let $L(mP)$ be the Riemann-Roch space of functions on $\mathcal{X}$ with poles only at $P$ of order at most $m$.  The one-point code $C(D, mP)$ is the set $\{(f(P_1), f(P_2),\dots f(P_n))\in\left(\mathbb{F}_q\right)^n:f\in L(mP)\}$.
\end{definition}

For simplicity, we define a one-point code on the Hermitian curve to take $mP_{\infty}$ as the pole divisor.  These codes have been well-studied, beginning with work by Tiersma \cite{Tiersma} and Stichtenoth \cite{Stichtenoth_codes}.  The Riemann-Roch space $L(mP_{\infty})$ on $\mathcal{H}_q$ can be explicitly written down with basis
\[\{x^iy^j:0\leq j\leq q-1, iq + j(q+1)\leq m\}.\] We use evaluation set $B=\mathcal{H}_q(\mathbb{F}_{q^2})\setminus \{P_{\infty}\}$, to obtain the evaluation divisor $D$.  Adapting the notation of \cite{BallicoRavagnani}, we define the code $C_{q,m}$ to be one-point code $C(D,mP_{\infty})$ with choices as above.

The length of $C_{q,m}$ is $n=q^3$.  The dimension $k$ of $C_{q,m}$ is given by $\dim(L(mP_{\infty}))$ for $m<q^3$ and $k$ can be determined using the Riemann-Roch theorem.  If $m>2g-2$ we have $k=\dim(L(mP_{\infty}))=m-g+1$; in general, $k=\dim(L(mP_{\infty})) \geq m-g+1$. The minimum distance $d$ of the code can be bounded by $d\geq n-m$, since any function with a single pole of at most order $m$ can have at most $m$ zeros.  The exact minimum distance has been determined for all values of $m$ \cite{Stichtenoth2009, YangKumar}.

Evaluation codes with locality from algebraic curves appear in \cite{BargTamoVladut}.  The authors define locally recoverable codes on curves, with locality arising from covering maps and recovery based on polynomial interpolation, and define a locally recoverable code with availability $t=2$ on $\mathcal{H}_q$ by viewing the curve as a fiber product.  In \cite{HaymakerMalmskogMatthews}, the fiber product construction is utilized to define codes on curves with higher availability, with arbitrarily large availability possible for codes over large fields.  A code of length $q^3$ on $\mathcal{H}_q$ is defined with availability $t=\frac{\log q}{\log p}$, where $p$ is any prime and $q=p^{t}$.  In this paper, we use an entirely different approach define codes of the same length on $\mathcal{H}_q$ (with $q$ even) with vastly higher availability $t=q^2-1$.  

\subsubsection{Contrast of \ourcodes\ with related literature}
\ourcodes\ have some similarities with constructions in the literature, but are distinct in a few important ways. The locally recoverable codes in \cite{BargTamoVladut} and \cite{HaymakerMalmskogMatthews} are based on fiber products of curves, an algebraic geometry construction which builds a curve $\mathcal{Y}$ by the product of several other curves $\mathcal{Y}_i$, $1\leq i\leq t$, each with maps to a shared base curve $\mathcal{X}$.  The functions evaluated in these codes are multivariate polynomials of bounded degree in each of the generators of the corresponding extensions of function fields. The $t$ disjoint recovery sets in these codes correspond to fibers of the induced covering maps from $\mathcal{Y}$ to $\mathcal{Y}_i$.  In \ourcodes, the recovery sets correspond to the intersection points of the Hermitian curve with non-tangent lines.  These could be viewed as the fibers of projection maps along each non-horizontal slope from $\mathcal{H}_q$ to a copy of the projective line $\mathbb{P}^1$, but these projection maps are not induced by a fiber product construction and there is not an obvious construction of the functions evaluated in \ourcodes\ as any simple class of functions in the compositum of the function fields of the projective lines. For these codes to arise from the fiber product construction, we would need to construct the Hermitian curve $\mathcal{H}_q$ as a fiber product of $q^2$ projective lines.  Each map $\mathcal{Y}_i\rightarrow\mathcal{X}$ would need to be degree $q+1$, and could ramify above at most two points.  By counting points, we see this could not lead to $\mathcal{H}_q$.  Thus our construction creates a code which could not arise from the fiber product approach.

We also contrast \ourcodes\ with other constructions based on lifted codes.  Lifted codes and their variations~\cite{GuoKS13,BGKKS13,G15,Wu15,LW18,FischerGW17} have provided constructions of codes with good locality and availability.  It might be tempting to think that \ourcodes\ are identical to bivariate lifts of Reed-Solomon codes, punctured to the Hermitian curve, but this is not the case.  Indeed, \ourcodes\ correspond to bivariate polynomials over $\F_{q^2}$ whose restriction to every line \em intersected \em with the Hermitian curve have degree at most $q-1$.  The relevant lifted code---corresponding to bivariate polynomials whose restriction to every line has degree at most $q-1$---is known to be equal to the bivariate Reed-Muller code of degree $q-1$~\cite{RS96}.  In particular, this code has dimension $O(q^2)$ and in particular the puncturing to the Hermitian code (of size $q^3$) has rate $O(\frac{1}{q})$.  Thus, the \ourcode\ is much larger than the corresponding lifted Reed-Solomon code, punctured to the Hermitian curve.   We also note that \ourcodes\ are different from the notions of lifted Hermitian codes given in \cite{G15,BGKKS13}.
The main difference is that in those works, the Hermitian code can be seen as the ``base code,'' while in our work, the Hermitian code is used in the definition of the lifting process.


\section{Code Construction}\label{sec:construction}
In this section, we give a few preliminary definitions and define \ourcodes. For the rest of the paper, let $\mathcal{X}=\mathcal{H}_q(\mathbb{F}_{q^2})\setminus\{P_{\infty}\}$.

As discussed in the Introduction, \ourcodes \ are codes that are lifted with respect to Hermitian curves.  More precisely, a \ourcode \ is the evaluation code of all bivariate polynomials that agree with low-degree polynomials on all lines intersected with $\mathcal{X}$.  We formalize this below. 

First, we make the observation that one-point Hermitian codes are themselves naturally locally recoverable with locality $q$ and availability $q^2-1$.

\begin{observation} \label{obs_1pt}
The one-point code $C_{q,m}$ is locally recoverable with locality $q$ and availability $q^2-1$ for all $m\leq q^2-1$.
\end{observation}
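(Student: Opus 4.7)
The plan is to exhibit, for each point $P \in \mathcal{X}$, a family of $q^2-1$ pairwise disjoint recovery sets of size $q$, obtained from the non-tangent lines through $P$. The main work is a degree bound on the restriction of $L(mP_\infty)$-functions to such lines.

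First I would fix $P \in \mathcal{X}$ and, following the parameterization described after Fact \ref{fact:lines}, enumerate the $q^2$ affine lines through $P$ as $\ell_\alpha : (x,y) = (\alpha t + \beta_\alpha,\, t)$ for $\alpha \in \mathbb{F}_{q^2}$, where $\beta_\alpha = x_P - \alpha y_P$. Exactly one such line is tangent to $\mathcal{H}_q$ at $P$, leaving $q^2-1$ non-tangent choices of $\alpha$. By Fact \ref{fact:lines}, each such $\ell_\alpha$ meets $\mathcal{H}_q$ in exactly $q+1$ distinct $\mathbb{F}_{q^2}$-points, one of which is $P$; let $R_\alpha$ be the set of the remaining $q$ intersection points.

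Next I would verify that each $R_\alpha$ is a recovery set for the coordinate corresponding to $P$. It suffices to show that for every $f \in L(mP_\infty)$, the restriction $f|_{\ell_\alpha}$, viewed as a polynomial in $t$, has degree at most $q-1$; then Lagrange interpolation over the $q$ distinct $t$-values indexing $R_\alpha$ uniquely determines the polynomial, hence its value at $P$. By linearity it is enough to check basis elements $x^i y^j$ with $0 \le j \le q-1$ and $iq + j(q+1) \le m$. Substituting the parameterization yields
\[
(x^i y^j)|_{\ell_\alpha}(t) = (\alpha t + \beta_\alpha)^i\, t^j,
\]
a polynomial of degree $i+j$ in $t$. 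Under the hypothesis $m \le q^2 - 1$, the constraints $iq + j(q+1) \le q^2-1$ and $j \ge 0$ give $(i+j)q \le iq + j(q+1) \le q^2-1 < q^2$, hence $i+j \le q-1$, as required.

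Finally I would observe disjointness: two distinct lines $\ell_\alpha, \ell_{\alpha'}$ through $P$ meet only at $P$ in $\mathbb{P}^2$, so $R_\alpha \cap R_{\alpha'} = \emptyset$ whenever $\alpha \neq \alpha'$. Combined with the previous step, this gives $q^2-1$ pairwise disjoint recovery sets, each of size $q$, establishing locality $q$ and availability $q^2-1$.

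There is no genuine obstacle here: the parameterization is directly provided in the text before Fact \ref{fact:lines}, disjointness is immediate from B\'ezout in $\mathbb{P}^2$, and the only calculation is the one-line degree bound $(i+j)q \le q^2-1$ derived from the basis description of $L(mP_\infty)$. The proof is essentially a verification that the hypothesis $m \le q^2-1$ is precisely the threshold at which the pullback of the Riemann--Roch basis lies in the Reed--Solomon-style code of degree $q-1$ along every non-tangent line.
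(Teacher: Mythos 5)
Your proof is correct and follows essentially the same route as the paper: parameterize the non-tangent lines through $P$, observe there are $q^2-1$ of them yielding disjoint size-$q$ intersections with $\mathcal{X}$, bound the $t$-degree of the restriction of each basis monomial $x^i y^j$ by $q-1$ using the pole-order constraint, and recover via Lagrange interpolation. Your version of the degree bound, $(i+j)q \le iq + j(q+1) \le q^2-1 < q^2$, is a slightly cleaner rearrangement of the paper's division-by-$q$ computation, but the argument is identical in substance.
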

\begin{proof}
Each index $i$ of a position in $C_{q,m}$ corresponds to a point $P_i$ in $\mathcal{H}_q(\mathbb{F}_{q^2})\setminus\{P_{\infty}\}$.  For any $\alpha,\beta\in\mathbb{F}_{q^2}$, we define the function $L_{\alpha,\beta}:\mathbb{F}_{q^2}\rightarrow \left(\mathbb{F}_{q^2}\right)^2$ so that \[L_{\alpha,\beta}(t)=(\alpha t+\beta, t).\]  For each such line $\Im(L_{\alpha,\beta})$ passing through $P_i$ which is not tangent to $\mathcal{H}_q$ at $P_i$, let $R_{i,\alpha}$ be the set of indices corresponding to points in the set $(\mathcal{H}_q\left(\mathbb{F}_{q^2})\cap \Im(L_{\alpha,\beta})\right)\setminus\{P_i\}$.  We see that $|R_{i,\alpha}|=q$ and there are $q^2-1$ such mutually disjoint sets for each $i$.  

Any codeword in $C_{q,m}$ is the evaluation of a function $f(x,y)$ which is a $\mathbb{F}_{q^2}$-linear combination of monomials of the form $x^ay^b$ where $b\leq q-1$ and $aq+b(q+1)\leq q^2-1$.  Thus 
\[a+b+b\frac{1}{q}\leq q-\frac{1}{q},\] so 
\[a+b\leq q-\frac{b+1}{q}.\]
Since $a$ and $b$ are non-negative integers, for each monomial in $f(x,y)$ we have $a+b\leq q-1$.  Thus for all points on the line $\Im(L_{\alpha,\beta})$, the function $f$ is identical to a univariate polynomial $g_{\alpha}(t)$ of degree at most $q-1$.  

If the symbol in position $i$ of the codeword corresponding to $f$ is erased, the value of $f(P_i)$ may be recovered by interpolating the polynomial $g_{\alpha}(t)$ from its values on the $q$ points with indices in $R_{i,\alpha}$.  
\end{proof}

It is worth noting that the one-point codes considered in Observation \ref{obs_1pt} have rate $$\frac{m+1-g+\dim L(K-mP_{\infty})}{q^3} \leq \frac{m+1}{q^3} \rightarrow 0$$ as $q \rightarrow \infty$, where $K$ denotes a canonical divisor on $\mathcal{H}_q$. In what follows, we develop \ourcodes, which have rate bounded away from $0$ as $q \rightarrow \infty$.

\begin{definition} For polynomials $f \in \F_{q^2}[x,y]$ and $g \in \F_{q^2}[t]$, and for a function $L:\F_{q^2}[t] \to \F_{q^2}^2$, we say that $f \circ L$ \em agrees with $g$ on $\mathcal{X}$ \em if $f(L(t)) = g(t)$ for all $t \in \F_{q^2}$ with $L(t) \in \mathcal{X}$.
\end{definition}

\begin{definition}\label{def:F}
Given a prime power $q$, 
let 
\[ \mathcal{F} = \inset{ f \in \F_{q^2}[x,y] \suchthat \begin{minipage}{8cm} \begin{center} $\forall L \in \mathcal{L}, \exists g \in \F_{q^2}[t]$  so that  $\deg(g) \leq q-1$  and so that $f \circ L$ agrees with $g$ on $\mathcal{X}$. \end{center} \end{minipage}} , \]
where above 
\[ \mathcal{L} = \inset{ L_{\alpha,\beta} \suchthat \alpha \in \F_{q^2}, \beta \in \F_{q^2} } \]
is the set of all lines of the form $L_{\alpha,\beta}(t) = (\alpha t + \beta, t)$.
\end{definition}

\begin{definition}[\ourcodes]\label{def:C}
Let $q$ be a prime power and let $\mathcal{F}$ be as in Definition~\ref{def:F}.
Define the \em \ourcode \em \ $\cC \subseteq \left(\F_{q^2}\right)^{q^3}$ as the evaluation code
\[  \cC = \inset{ \left( f(x,y) \right)_{(x,y) \in \mathcal{X}} \suchthat f \in \mathcal{F} }. \]
\end{definition}

We note that $\mathcal{X}$, $\mathcal{F}$ and $\cC$ depend on the choice of $q$; we suppress this in the notation since it will be clear from context. It is evident that $C_{q,m}$ is a subcode of $\cC$. 

\begin{remark}[Horizontal lines]
We ignore horizontal lines in our definition of $\mathcal{L}$ because they only intersect $\mathcal{X}$ in $q$ affine places, rather than $q+1$.  In particular, horizontal lines are different than non-horizontal lines because the point at $\infty$ is not an evaluation point in the code construction. As we see below, this will affect the locality of our resulting code.  
\end{remark}

It is easy to see (Observation~\ref{obs:local} below) that \ourcodes\ have locality and availability; the challenging task is to analyze the rate.

\begin{observation}\label{obs:local}
Let $q$ be any prime power, and 
let $\cC$ be the \ourcode \ as defined in Definition~\ref{def:C}.  Then
$\cC$ has locality $q$ and availability $q^2-1$.
\end{observation}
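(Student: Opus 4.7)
The plan is to mirror the argument of Observation~\ref{obs_1pt}, exploiting the fact that the defining property of $\mathcal{F}$---that the restriction to every line in $\mathcal{L}$ agrees on $\mathcal{X}$ with a polynomial of degree at most $q-1$---is exactly the interpolation condition the proof needs. All of the geometric content has already been prepared by Fact~\ref{fact:lines} and the discussion following it, so the argument should reduce to bookkeeping plus a single invocation of Lagrange interpolation.

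First I would fix an index $i$ corresponding to a point $P_i = (x_i, y_i) \in \mathcal{X}$. The lines of $\mathcal{L}$ passing through $P_i$ are exactly $\{L_{\alpha,\, x_i - \alpha y_i} : \alpha \in \F_{q^2}\}$, a set of $q^2$ lines, and, as recorded in the paragraph following Fact~\ref{fact:lines}, exactly $q^2 - 1$ of these are non-tangent to $\mathcal{H}_q$ at $P_i$. For each such non-tangent line $L_{\alpha,\beta}$, Fact~\ref{fact:lines} gives $q+1$ distinct intersection points with $\mathcal{H}_q$, all affine and $\F_{q^2}$-rational because the line avoids $P_\infty$, so all lying in $\mathcal{X}$; removing $P_i$ yields a candidate recovery set $R_{i,\alpha}$ of $q$ coordinates. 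Pairwise disjointness of the resulting $q^2-1$ sets is immediate: any two distinct lines in $\mathbb{P}^2$ meet in at most one point, which here must be $P_i$, and $P_i$ has been removed.

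For recovery, let $c = (f(P))_{P \in \mathcal{X}}$ be a codeword with $f \in \mathcal{F}$, and fix a non-tangent line $L_{\alpha,\beta}$ through $P_i$. Definition~\ref{def:F} supplies $g \in \F_{q^2}[t]$ of degree at most $q-1$ with $f(L_{\alpha,\beta}(t)) = g(t)$ for every $t \in \F_{q^2}$ with $L_{\alpha,\beta}(t) \in \mathcal{X}$. Distinct points on $L_{\alpha,\beta}$ have distinct $t$-coordinates, since $t$ is just the $y$-coordinate of the image, so the codeword symbols at the $q$ indices in $R_{i,\alpha}$ furnish the values of $g$ at $q$ distinct nodes of $\F_{q^2}$; Lagrange interpolation then reconstructs $g$ uniquely, and $c_i = f(P_i) = g(y_i)$ recovers the erased symbol. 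I do not anticipate a substantive obstacle here, because the construction of $\mathcal{F}$ has been engineered to make locality and availability essentially automatic; the one point worth being explicit about, for a cautious reader, is that the unique tangent line at every affine point of $\mathcal{H}_q$ has the form $L_{\alpha,\beta}$ and is therefore genuinely among the $q^2$ candidates enumerated above, which is already implicit in the discussion following Fact~\ref{fact:lines}.
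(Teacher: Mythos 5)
Your proposal is correct and follows essentially the same route as the paper's proof: enumerate the $q^2-1$ non-tangent lines of $\mathcal{L}$ through the given point, observe that any two meet only at that point so the resulting size-$q$ repair sets are disjoint, and recover the erased value by Lagrange interpolation of the degree-$\le q-1$ polynomial $g$ guaranteed by Definition~\ref{def:F}. The only differences are cosmetic extra detail (explicitly writing $\beta = x_i - \alpha y_i$, noting that the $t$-parametrization is injective, and confirming the tangent line lies in $\mathcal{L}$), all of which the paper leaves implicit.
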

%
\begin{proof}
For any point $(x,y) \in \mathcal{X}$, there are $q^2 - 1$ lines $L_{\alpha,\beta}(t) \in \mathcal{L}$
that pass through $(x,y)$, that are not tangent to $\mathcal{X}$, and that are not horizontal.  
Any two of these lines intersect only in the point $(x,y)$, 
and each has $q$ points on $\mathcal{X}$ other than $(x,y)$.
These $q$ points form a repair group for the coordinate of $\cC$ indexed by $(x,y)$.
Indeed, let $f \in \mathcal{F}$, and suppose that $L(t)$ is such a line.  Let $t_0 \in \F_{q^2}$ be so that $L(t_0) = (x,y)$. As $f \in \mathcal{F}$, let $g(t)$ be a polynomial of degree at most $q-1$ so that $f(L(t)) = g(t)$ for any $t$ so that  $L(t) \in \mathcal{X}$.  
Given the $q$ values
\[ \inset{ f(x,y) \suchthat (x,y) \in (\Im(L) \cap \mathcal{X}) \setminus \{(x,y)\} }
= \inset{ g(t) \suchthat t \in \F_{q^2} \setminus \{t_0\} }. \]
of $f(x,y)$ on $(\Im(L) \cap \mathcal{X}) \setminus \{(x,y)\}$, one can use Lagrange interpolation to recover the polynomial $g$, and hence $g(t_0) = f(L(t_0)) = f(x,y)$.  Thus the symbol $f(x,y)$ of the codeword corresponding to $f$ can be recovered by the $q$ other symbols in the repair group corresponding to points in $(\Im(L) \cap \mathcal{X}) \setminus \{(x,y)\}$.
\end{proof}

\section{Main Theorem and Proof}\label{sec:mainthm}

Our main result is that \ourcodes \ have rate bounded below by a constant independent of $q$. It is an immediate observation that $\cC$ has rate at least $\frac{q(q+1)}{2q^3} \geq \frac{1}{2q}$, since $C_{q,q^2-1}$ is a subcode of $\cC$ and the dimension of $C_{q,q^2-1}$ is $\frac{q(q+1)}{2}$. 
However, what may be surprising is that in fact there are many functions $f \in \mathcal{F} \setminus L\left((q^2-1)P_{\infty}\right)$, enough so that the rate of the code $\cC$ is actually bounded below by a constant independent of $q$.

\begin{theorem}\label{thm:main}
Suppose that $q \geq 4$ is a power of $2$, and let $\cC$ be as in Definition~\ref{def:C}.  Then the rate of $\cC$ is at least $0.007$. 
\end{theorem}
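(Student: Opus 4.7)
The plan is to identify a set $S \subseteq \{0,\dots,q-1\} \times \{0,\dots,q^2-1\}$ of exponent pairs such that $x^a y^b \in \mathcal{F}$ for each $(a,b) \in S$, and to show $|S| \geq 0.007 \, q^3$. For linear independence I would invoke the fact that $\{x^a y^b : 0 \leq a < q, \, 0 \leq b < q^2\}$ is a basis of $\F_{q^2}^{\mathcal{X}}$: for each fixed $y_0 \in \F_{q^2}$ there are exactly $q$ distinct solutions $x \in \F_{q^2}$ to $x^q + x = y_0^{q+1}$ (since $x \mapsto x^q + x$ is an $\F_q$-linear surjection $\F_{q^2} \twoheadrightarrow \F_q$ with kernel $\F_q$, and $x^q + x - c$ is separable), on which $1, x, \ldots, x^{q-1}$ form a Vandermonde system; and for each fixed $a$ the polynomial $\sum_b c_{a,b} y^b$ of degree less than $q^2$ can only vanish identically on $\F_{q^2}$ if all $c_{a,b}$ are $0$. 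Consequently, any subfamily indexed by $S$ is linearly independent on $\mathcal{X}$, giving $\dim \cC \geq |S|$.

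The core step is to derive a sufficient condition for $x^a y^b \in \mathcal{F}$. Restricting to a line $L_{\alpha,\beta}(t) = (\alpha t + \beta, t)$ gives $h(t) := (\alpha t + \beta)^a t^b$, and by Fact~\ref{fact:lines} the intersection $\Im(L_{\alpha,\beta}) \cap \mathcal{X}$ (for non-tangent $L$) consists of the $q+1$ distinct roots of $P_{\alpha,\beta}(t) := t^{q+1} + \alpha^q t^q + \alpha t + (\beta + \beta^q)$. Since $h(t) \bmod P_{\alpha,\beta}(t)$ is the unique polynomial of degree at most $q$ agreeing with $h$ on these $q+1$ roots, the condition $x^a y^b \in \mathcal{F}$ is equivalent to saying that for every admissible $(\alpha,\beta) \in \F_{q^2}^2$, the coefficient of $t^q$ in $h(t) \bmod P_{\alpha,\beta}(t)$ vanishes. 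The easy regime $a + b \leq q-1$ recovers the one-point code and contributes only $O(q^2)$ pairs, so the real task is to identify good pairs with $a + b \geq q$. Because $q$ is a power of $2$, Lucas's theorem gives $(\alpha t + \beta)^a = \sum_{k \preceq a} \alpha^k \beta^{a-k} t^k$ (with $\preceq$ denoting bit-dominance), so $h(t) = \sum_{k \preceq a} \alpha^k \beta^{a-k} t^{k+b}$ is a sparse sum. Iteratively applying the substitution $t^{q+1} \equiv \alpha^q t^q + \alpha t + (\beta + \beta^q) \pmod{P_{\alpha,\beta}}$ to each term $t^{k+b}$ with $k + b > q$ produces a polynomial $Q_{a,b}(\alpha,\beta) \in \F_{q^2}[\alpha,\beta]$ encoding the $t^q$-coefficient; the aim is to extract a clean combinatorial condition on the base-$2$ digits of $a$ and $b$ under which $Q_{a,b} \equiv 0$ identically.

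Finally I would count, showing that the resulting set $S$ of pairs $(a,b) \in [0,q) \times [0,q^2)$ satisfying the sufficient condition has $|S| \geq 0.007 \, q^3$. A natural family consists of pairs whose binary expansions interact with $q$ in controlled ways so that contributions to $Q_{a,b}$ pair off by a symmetry, and a direct combinatorial estimate over such pairs yields the claimed constant, with $0.007$ emerging from the optimization. The main obstacle I foresee is the second step: iterated reduction of $t^{k+b} \bmod P_{\alpha,\beta}$ produces a proliferating sum of cross-terms in $\alpha$ and $\beta$, and isolating conditions on $(a,b)$ that force all contributions to the $t^q$-coefficient to cancel \emph{as polynomials in $\alpha, \beta$} (rather than only on the finite set $\F_{q^2}^2$) requires delicate bookkeeping. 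Characteristic $2$ is indispensable: it collapses binomial coefficients to $\{0,1\}$, linearizes the $q$-th power map via Frobenius, and keeps the surviving cross-terms few enough to analyze.
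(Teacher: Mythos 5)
Your setup is correct and matches the paper's framing: you reduce modulo $\pab(t) = t^{q+1} + \alpha^q t^q + \alpha t + (\beta+\beta^q)$, observe that $x^a y^b \in \mathcal{F}$ iff the $t^q$-coefficient of $(\alpha t+\beta)^a t^b \bmod \pab$ vanishes for all lines, expand $(\alpha t+\beta)^a$ via Lucas in characteristic $2$, and aim to count the surviving exponent pairs. Your linear-independence argument (Vandermonde in $x$ for each fixed $y_0$, then a degree argument in $y$) is valid and a clean alternative to the paper's Gr\"obner-basis justification for Proposition~\ref{prop:linind}.

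However, there is a genuine gap: you never supply the core technical lemma, and you say so yourself (``the main obstacle I foresee is the second step''). Iterated substitution $t^{q+1} \mapsto \alpha^q t^q + \alpha t + \gamma$ does, as you anticipate, produce a proliferating sum of cross-terms, and extracting a clean condition on the binary digits of $(a,b)$ from that expansion is exactly what is hard. The paper sidesteps this entirely. Instead of reducing $t^k$ term-by-term, it computes the $t^q$-coefficient of $t^k \bmod \pab$ in closed form by Lagrange interpolation over the $q+1$ roots $\sigma_0,\dots,\sigma_q$ of $\pab$, showing (Proposition~\ref{04.10.20}) that this coefficient vanishes iff $P_{k+1} = \alpha^q P_k$ for the power sums $P_k = \sum_i \sigma_i^k$. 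It then establishes a linear recurrence $P_k = \alpha^q P_{k-1} + \alpha P_{k-q} + \gamma P_{k-q-1}$ and proves (Proposition~\ref{04.11.20}) that the $q\times q$ matrix $(P_{iq+j})_{0\le i,j<q}$ factors as a Kronecker product of $2\times 2$ blocks. From this factorization, Theorem~\ref{lem:tpower} reads off an explicit sufficient condition on the base-$q$ expansion $k=wq+z$ (namely $w\equiv 0 \bmod 2^i$ and $z\not\equiv -1 \bmod 2^i$ for some $i$), and Claim~\ref{cl:good} combines it with the Lucas shadow $j\le_2 a$ to give the concrete digit conditions on $(a,b)$, after which the $0.007$ bound falls out of an explicit geometric-series count. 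Your sketch gestures at ``contributions pairing off by a symmetry'' but never identifies the mechanism; the Newton's-identities/power-sum reformulation and the Kronecker structure of the $P_k$ matrix are the missing ideas that make the argument go through, and without them the proof is not complete.
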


For the rest of this section, we will assume that $q = 2^\ell$ is a power of two, as in the hypotheses of Theorem~\ref{thm:main}.
The strategy will be to find a large set of monomials $M_{a,b}(x,y) := x^ay^b$ for $a \leq q-1$ and $b \leq q^2 - 1$ so that $x^a y^b \in \cC$.  

It is not hard to see that such monomials lead to linearly independent codewords as shown in the next result.

\begin{proposition}\label{prop:linind}
Let $M_{a,b}(x,y) = x^a y^b$.  Then the set of vectors
\[ \inset{ \left( M_{a,b}(x,y) \right)_{(x,y) \in \mathcal{X}} \suchthat 0\leq a \leq q-1,0\leq b \leq q^2 - 1 } \] 
are linearly independent.
\end{proposition}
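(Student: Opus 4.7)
The plan is to exploit the fact that the $q^3$ points of $\mathcal{X}$ naturally fiber over $\F_{q^2}$ via the $y$-coordinate, with every fiber of size exactly $q$. Concretely, for each $y_0 \in \F_{q^2}$ the norm $y_0^{q+1}$ lies in $\F_q$, since $(y_0^{q+1})^q = y_0^{q^2+q} = y_0 \cdot y_0^q = y_0^{q+1}$. The map $\phi \colon \F_{q^2} \to \F_{q^2}$ given by $\phi(x) = x^q + x$ is $\F_q$-linear, takes values in $\F_q$ (since $\phi(x)^q = \phi(x)$), and has kernel of size $q$; hence its image is all of $\F_q$ and each value is attained exactly $q$ times. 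Consequently, for every $y_0 \in \F_{q^2}$ there are exactly $q$ distinct $x \in \F_{q^2}$ with $(x,y_0) \in \mathcal{X}$.

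Given the fibration above, I would assume a dependence $\sum_{a,b} c_{a,b} (x^a y^b)\big|_\mathcal{X} = 0$ with $0 \leq a \leq q-1$ and $0 \leq b \leq q^2-1$, and run a two-stage polynomial vanishing argument. Fix any $y_0 \in \F_{q^2}$ and regroup the sum as the univariate polynomial
\[
P_{y_0}(x) \;=\; \sum_{a=0}^{q-1}\Bigl(\sum_{b=0}^{q^2-1} c_{a,b}\, y_0^{b}\Bigr) x^a,
\]
which has degree at most $q-1$. By the structural observation, $P_{y_0}$ has at least $q$ distinct roots (the $x$-coordinates in the fiber above $y_0$), so it vanishes identically. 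Therefore, for every $a \in \{0,\dots,q-1\}$ and every $y_0 \in \F_{q^2}$, the coefficient $Q_a(y_0) := \sum_{b=0}^{q^2-1} c_{a,b}\, y_0^{b}$ equals $0$. Since $\deg Q_a \leq q^2-1$ and $Q_a$ vanishes at all $q^2$ elements of $\F_{q^2}$, it is the zero polynomial, yielding $c_{a,b}=0$ for all pairs $(a,b)$.

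There is no real obstacle here: the only nontrivial input is the fiber count, which is a standard computation with the Artin--Schreier-type map $\phi$ and the norm map. The rest is two textbook applications of the fact that a univariate polynomial of degree less than $N$ with $N$ roots must be zero, used first in the variable $x$ on each fiber and then in the variable $y$ across fibers. The degree bounds $a \leq q-1$ and $b \leq q^2-1$ match the two counts $q$ and $q^2$ exactly, which is what makes the argument tight.
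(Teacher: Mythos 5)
Your proof is correct, and it takes a genuinely different route from the paper's. The paper argues via Gr\"obner bases: it observes that the kernel of the evaluation map $\F_{q^2}[x,y] \to \F_{q^2}^{\mathcal{X}}$ is generated by $y^{q+1}-x^q-x$, $y^{q^2}-y$, $x^{q^2}-x$, that $\{y^{q+1}-x^q-x,\; y^{q^2}-y\}$ is a Gr\"obner basis under an ordering with $y^{q+1}<x^q$, and hence that the $q^3$ monomials $x^ay^b$ with $a\leq q-1$, $b\leq q^2-1$ are a standard monomial basis for the quotient, so their evaluations are independent. Your argument instead fibers $\mathcal{X}$ over the $y$-coordinate, uses that the trace $x\mapsto x^q+x$ is a $q$-to-$1$ surjection of $\F_{q^2}$ onto $\F_q$ to get exactly $q$ points per fiber, and then kills a putative dependence by two successive root-counting steps, first in $x$ on each fiber (degree $\leq q-1$, $q$ roots) and then in $y$ across fibers (degree $\leq q^2-1$, $q^2$ roots). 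The paper's route is terser if one already has the ideal generators and Gr\"obner machinery in hand; yours is more elementary and self-contained, and it makes the choice of the degree bounds $q-1$ and $q^2-1$ transparent, since they match the fiber size and the number of fibers exactly. One small point you glossed over is why the kernel of the trace has size exactly $q$ (and hence the image is all of $\F_q$); this is standard, e.g.\ because $x^q+x$ divides $x^{q^2}-x$ and so splits with $q$ distinct roots in $\F_{q^2}$, but it is worth stating.
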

\begin{proof}
The kernel of the evaluation map of the affine points of the Hermitian curve is generated by $y^{q+1}-x^q-x, y^{q^2}-y, x^{q^2}-x$. Under any monomial ordering where $y^{q+1}<x^{q}$, the polynomials $y^{q+1}-x^q-x, y^{q^2}-y$ are a Gr\"obner basis for the kernel. Hence the monomial set $M_{a,b}(x,y)$, $0 \leq a \leq q-1$, $0 \leq b \leq q^2-1$ can not contain any element from the kernel of the evaluation map, which implies the evaluations of $M_{a,b}$ are linearly independent.
\end{proof}
Since such monomials lead to linearly independent codewords by Proposition~\ref{prop:linind}, bounding the number of them in $\cC$ will give us a lower bound on the dimension of $\cC$.

The proof proceeds in two steps.  We give a brief overview below, after we introduce some necessary notation.

\newcommand{\pab}{p_{\alpha, \beta}}
\newcommand{\dab}{\deg_{\alpha, \beta}}
\begin{definition}[$\pab,\dab$]\label{def:pab}
Given $\alpha, \beta \in \F_{q^2}$, define
\begin{equation}\label{eq:defofp}
 \pab(t) := t^{q+1} + \alpha^q t^q + \alpha t + (\beta + \beta^q) = t^{q+1} + \alpha^q t^q+ \alpha t + \gamma, 
\end{equation}
where above we are defining $\gamma := \beta + \beta^q$.
For a polynomial $g(t) \in \F_{q^2}[t]$,
let $\bar{g}(t)$ be the remainder obtained when 
$g(t)$ is divided by $\pab(t)$, and define
\[ \dab(g) := \deg( \bar{g}_{\alpha,\beta}(t) ). \]
Notice that $\dab(g) \leq q$ for all $g \in \F_{q^2}[t]$.
\end{definition}

To see why Definition~\ref{def:pab} is relevant, consider a line $L_{\alpha,\beta}(t) = (\alpha t + \beta, t)$.
Notice that $M_{a,b} \circ L_{\alpha,\beta}$ agrees with a polynomial $g$ of degree strictly less than $q$ on $\mathcal{X}$ if any only if
\[ \dab(M_{a,b} \circ L_{\alpha,\beta}) < q. \]
Indeed, write
\[ (M_{a,b} \circ L_{\alpha, \beta})(t) = h(t)\pab(t) + g(t) \]
for some polynomial $g(t)$ of degree at most $q$.  
Then for any $t$ so that $L_{\alpha,\beta}(t) \in \mathcal{X}$, 
we have by definition that $t^{q+1} = (\alpha + \beta t)^q + \alpha + \beta t,$ or in other words that $\pab(t) = 0$.  Thus, $M_{a,b} \circ L_{\alpha, \beta}$ agrees with $g(t)$ on $\mathcal{X}$, and since there are $q+1$ such values of $t$, $g(t)$ is the unique polynomial of degree at most $q$ for which this is true.

We say that a monomial $M_{a,b}$ is \textbf{good} if for all lines $L_{\alpha,\beta} \in \mathcal{L}$, 
\[ \dab(M_{a,b} \circ L_{\alpha, \beta} ) < q. \]  
The reasoning above leads to the following observation.
\begin{observation}\label{obs:good}
If $M_{a,b}$ is good, then $M_{a,b} \in \mathcal{F}$.
\end{observation}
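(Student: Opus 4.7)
My plan is to verify the defining condition of $\mathcal{F}$ directly, essentially by reading off the computation spelled out in the paragraphs preceding the statement. Fix an arbitrary line $L = L_{\alpha,\beta} \in \mathcal{L}$. First I would divide $M_{a,b} \circ L$ by $\pab$ via polynomial division, obtaining $(M_{a,b} \circ L)(t) = h(t)\pab(t) + g(t)$ with $\deg(g) \leq q$. By Definition~\ref{def:pab}, $\deg(g) = \dab(M_{a,b} \circ L)$.

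The crucial intermediate claim is that $\pab(t) = 0$ for every $t \in \F_{q^2}$ with $L(t) \in \mathcal{X}$. This is a short substitution: plugging $x = \alpha t + \beta$ and $y = t$ into the Hermitian equation $x^q + x = y^{q+1}$ yields
\[
(\alpha t + \beta)^q + (\alpha t + \beta) - t^{q+1} = \alpha^q t^q + \beta^q + \alpha t + \beta - t^{q+1} = 0,
\]
and since we are in characteristic $2$ this rearranges exactly to $\pab(t) = t^{q+1} + \alpha^q t^q + \alpha t + (\beta + \beta^q) = 0$. Consequently $(M_{a,b} \circ L)(t) = g(t)$ at every such $t$, so $M_{a,b} \circ L$ agrees with $g$ on $\mathcal{X}$ in the sense of the definition preceding Definition~\ref{def:F}.

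Now I would invoke the goodness hypothesis: by assumption $\dab(M_{a,b} \circ L) < q$, i.e.\ $\deg(g) \leq q - 1$. Since $L$ was an arbitrary element of $\mathcal{L}$, the polynomial $g$ produced above serves as a witness to the condition defining $\mathcal{F}$, and therefore $M_{a,b} \in \mathcal{F}$.

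I do not expect any real obstacle here; the observation is essentially bookkeeping on the polynomial-division identity already derived in the text. The only non-trivial content is the verification that $\pab$ is precisely the specialization of the Hermitian defining polynomial to the parameterized line $L_{\alpha,\beta}$, which is the one-line substitution above and relies on $\operatorname{char}(\F_{q^2}) = 2$ to absorb signs cleanly.
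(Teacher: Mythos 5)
Your proof is correct and matches the paper's reasoning exactly; the paper does not give a separate proof of Observation~\ref{obs:good} but rather derives it from the preceding paragraph, which is precisely the division-by-$\pab$ argument and the observation that $\pab(t)=0$ whenever $L_{\alpha,\beta}(t)\in\mathcal{X}$. Your explicit verification that $\pab$ is the characteristic-$2$ specialization of the Hermitian equation to the line is a nice touch that the paper leaves implicit.
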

Thus, our goal will be to find a big set of good monomials.
Our approach proceeds in two steps.  In the first step (Section~\ref{ssec:monomials}), we give a condition for when the monomial $t^k$ has degree at most $q-1$ modulo $\pab(t)$.  In the second step (Section~\ref{ssec:rate}), we use this condition, along with Lucas' theorem, to show that there are many good monomials.

\subsection{Behavior of monomials $t^k$ modulo $\pab(t)$}\label{ssec:monomials}

In this section, we give a condition on $k$ for the monomial $t^k$ to be low-degree modulo $\pab(t)$ and prove Theorem~\ref{lem:tpower} at the end of this section after we develop the necessary ingredients.
Let $\alpha, \beta$ be elements of $\mathbb{F}_{q^2}$ such that the line $L_{\alpha,\beta}(t) = (\alpha t + \beta, t)$ is not tangent to the Hermitian curve $\mathcal{X}$. 
As $\alpha,\beta$ are fixed for the rest of this section, for notational convenience the polynomial $\pab(t)$ will be denoted by $p(t)$ and $L_{\alpha,\beta}(t)$ will be denoted by $L(t)$.
As in Definition~\ref{def:pab}, we let $\gamma := \beta + \beta^q$.  Notice that $\gamma \in \F_q$.

Let $\sigma_0, \ldots, \sigma_q$ be the roots of $p(t)$.  There are $q+1$ distinct roots of $p(t)$ because there are $q+1$ distinct points in $\Im(L) \cap \mathcal{X}$.
Thus
\[ p(t) = t^{q+1} + \alpha^q t^q + \alpha t + \gamma=(t-\sigma_0)\cdots(t-\sigma_q)=c_0t^{q+1}+c_1t^q+\cdots+ c_qt+c_{q+1}, \]
where $c_k= \sum_{S \subset \{0, \ldots, q\}, |S| = k} \prod_{\ell \in S} \sigma_\ell,$ for $k=0,\ldots,q.$ In particular we have
\begin{align}
c_0 &= 1 \label{eq:c0}\\
c_1 &= \sum_{i=0}^q \sigma_i = \alpha^q \label{eq:c1}\\
c_k &= 0 \ \ \forall 1 < k < q \label{eq:ck}\\
c_q &= \sum_{i=0}^q \frac{ \sigma_0 \cdots \sigma_q }{\sigma_i} = \alpha \label{eq:cq}\\
c_{q+1} &= \sigma_0 \cdots \sigma_q = \gamma. \label{eq:cq1}
\end{align}

For any $k\geq 0$ we define the element $\displaystyle P_k=\sum_{i=0}^{q}\sigma_i^k .$ 
We show below that the values $P_k$ provide a sufficient condition to guarantee  $\dab(t^k) < q$.
\begin{proposition}\label{04.10.20}
Let $q$ be a power of $2$ and let $\alpha, \beta \in \F_{q^2}$.
Then $P_{k+1}=\alpha^q P_k$ if and only if $\dab(t^k) < q$. 
\end{proposition}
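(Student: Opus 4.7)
The plan is to compute the coefficient of $t^q$ in $\bar{g}(t) := \overline{t^k}$ (the remainder of $t^k$ modulo $p(t)$) directly via Lagrange interpolation at the roots $\sigma_0,\ldots,\sigma_q$ of $p$. Since the line is not tangent, $p(t)$ has $q+1$ distinct roots, so $\bar{g}$ is the unique polynomial of degree at most $q$ with $\bar{g}(\sigma_i) = \sigma_i^k$ for every $i$. Lagrange interpolation then gives
\[ \bar{g}(t) = \sum_{i=0}^q \frac{\sigma_i^k}{p'(\sigma_i)}\prod_{j\neq i}(t - \sigma_j), \]
so that $\dab(t^k) < q$ if and only if
\[ [t^q]\,\bar{g}(t) \;=\; \sum_{i=0}^{q} \frac{\sigma_i^k}{p'(\sigma_i)} \;=\; 0. \]

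Next, I would use that $q$ is a power of $2$ to simplify $p'$: in characteristic $2$,
\[ p'(t) = (q+1)t^q + q\alpha^q t^{q-1} + \alpha = t^q + \alpha, \]
so $p'(\sigma_i) = \sigma_i^q + \alpha$. The key algebraic step is to use the defining relation $p(\sigma_i)=0$ to rewrite $1/(\sigma_i^q + \alpha)$ as a \emph{linear} function of $\sigma_i$. Expanding $(\sigma_i + \alpha^q)(\sigma_i^q + \alpha)$ and substituting $\sigma_i^{q+1} = \alpha^q \sigma_i^q + \alpha\sigma_i + \gamma$ (valid in characteristic $2$) collapses cross terms to yield
\[ (\sigma_i + \alpha^q)(\sigma_i^q + \alpha) \;=\; \gamma + \alpha^{q+1}. \]
I would separately verify that $\gamma + \alpha^{q+1} \neq 0$, using the Euclidean-algorithm computation $p(t) = t\cdot p'(t) + \alpha^q t^q + \gamma$ to see that $\gcd(p,p')$ divides the constant $\gamma + \alpha^{q+1}$; vanishing of this constant would force $p$ to share a factor with $p'$, contradicting the assumption that $p$ has distinct roots. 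Hence dividing through is legal, giving
\[ \frac{1}{\sigma_i^q + \alpha} \;=\; \frac{\sigma_i + \alpha^q}{\gamma + \alpha^{q+1}}. \]

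Finally, substituting into the formula for $[t^q]\bar{g}(t)$ and pulling the nonzero scalar outside gives
\[ [t^q]\,\bar{g}(t) \;=\; \frac{1}{\gamma + \alpha^{q+1}}\sum_{i=0}^{q} \sigma_i^k(\sigma_i + \alpha^q) \;=\; \frac{P_{k+1} + \alpha^q P_k}{\gamma + \alpha^{q+1}}, \]
and since the denominator is nonzero and we are in characteristic $2$, this vanishes if and only if $P_{k+1} = \alpha^q P_k$, which is the desired equivalence. The main obstacle is the nontrivial algebraic identity $(\sigma_i + \alpha^q)(\sigma_i^q + \alpha) = \gamma + \alpha^{q+1}$ together with justifying non-vanishing of the right-hand side; once these are in hand the rest is Lagrange interpolation and bookkeeping in characteristic $2$.
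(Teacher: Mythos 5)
Your proof is correct and takes essentially the same route as the paper: Lagrange interpolation at the roots $\sigma_0,\dots,\sigma_q$, the characteristic-$2$ simplification $p'(t)=t^q+\alpha$, and the identity $(\sigma_i+\alpha^q)(\sigma_i^q+\alpha)=\gamma+\alpha^{q+1}$ to extract the leading coefficient as $(P_{k+1}+\alpha^q P_k)/(\gamma+\alpha^{q+1})$. One small improvement over the paper's write-up: you explicitly justify $\gamma+\alpha^{q+1}\neq 0$ via the Euclidean step $p(t)=t\,p'(t)+\alpha^q t^q+\gamma$ and the non-tangency (distinct-roots) hypothesis, a point the paper uses implicitly without comment.
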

\begin{proof}
Write
\[ t^k= g_k(t)p(t) + \bar{g}_k(t) \]
for some polynomial $g_k(t)$ so that the polynomial $\bar{g}_k(t)$ has degree at most $q$. Our goal is to show that $\deg(\bar{g}_k(t)) < q$ if and only if $P_{k+1}=\alpha^q P_k$.

As $\sigma_0, \ldots, \sigma_q$ are the roots of $p(t)$, we have $\bar{g}_k(\sigma_i)=\sigma_i^k.$ 
Thus, we know $q+1$ values of $\bar{g}_k$. Since $\bar{g}_k$ has degree less than $q$, we may use Lagrange interpolation to write
\[ \bar{g}_k(t) = \sum_{i=0}^q \sigma_i^k \prod_{\ell \neq i} \inparen{ \frac{ t - \sigma_\ell }{ \sigma_i - \sigma_\ell } }=
\left(\sum_{i=1}^q \sigma_i^k \prod_{\ell \neq i} \frac{1}{\sigma_i - \sigma_\ell}\right) t^q + r(t), \]
where $\deg(r) < q$. 
Since
\[p(t) = t^{q+1} + \alpha^q t^q + \alpha t + \gamma=(t-\sigma_0)\cdots(t-\sigma_q),\] 
taking the derivative of both sides yields 
\[  p^\prime(t) = t^{q} +  \alpha =\sum_{i=0}^q \prod_{\ell \neq i} (t - \sigma_\ell). \]
Thus, 
\[ \displaystyle p^\prime(\sigma_i) = \sigma_i^{q} +  \alpha =\prod_{\ell \neq i} (\sigma_i - \sigma_\ell).\]

Because $\sigma_i$ is a root of $p(t)$, we have $\sigma_i^{q+1} + \alpha^q \sigma_i^q + \alpha \sigma_i = \gamma$; hence,  
\[  \left(\sigma_i^q + \alpha \right) \left(\sigma_i + \alpha^q \right) = \alpha^{q+1}+\gamma. \]
Thus, we get
\[ \displaystyle \prod_{\ell \neq i} (\sigma_i - \sigma_\ell)=\frac{\alpha^{q+1}+\gamma}{\sigma_i + \alpha^q}.\]
As a consequence the coefficient
of $t^q$ in $\bar{g}_k(t)$ is given by
\[\displaystyle \sum_{i=1}^q \frac{\sigma_i^k\left(\sigma_i + \alpha^q\right)}{\alpha^{q+1}+\gamma}=\frac{P_{k+1}+\alpha^qP_{k}}{\alpha^{q+1}+\gamma}.\]
Thus, this coefficient is zero exactly when $P_{k+1} = \alpha^q P_k$, as desired.
\end{proof}

The goal now is to find $k$ such that $P_{k+1}=\alpha P_k.$  We begin with an observation about $P_k$ for $0 \leq k < q$.

\begin{lemma}\label{04.07.20}
Let $q$ be a power of $2$.
For $0 \leq k < q$, $\displaystyle P_k = \alpha^{qk}$ and $\displaystyle P_{kq} = \alpha^{k}.$
\end{lemma}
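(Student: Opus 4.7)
}
The plan is to apply Newton's identities to the roots $\sigma_0,\dots,\sigma_q$ of $p(t)$, exploiting the fact that almost all of the elementary symmetric functions $c_j$ vanish (by \eqref{eq:ck}) and that we are in characteristic $2$.

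First I would handle the first claim $P_k = \alpha^{qk}$ for $0\leq k < q$ by induction on $k$. The base cases are $P_0 = q+1 = 1 = \alpha^{q\cdot 0}$ (using $q$ even, so $q+1\equiv 1\pmod 2$) and $P_1 = c_1 = \alpha^q$ by \eqref{eq:c1}. For the inductive step, Newton's identity in the range $1<k\leq q+1$ reads (in characteristic $2$)
\[
P_k \;=\; c_1 P_{k-1} + c_2 P_{k-2} + \cdots + c_{k-1} P_1 + k\, c_k .
\]
For $1 < k < q$ we have $c_2 = \cdots = c_{k-1} = c_k = 0$ by \eqref{eq:ck}, so the recurrence collapses to $P_k = c_1 P_{k-1} = \alpha^q P_{k-1}$. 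Combined with the base case this yields $P_k = \alpha^{qk}$ for all $0\leq k<q$.

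For the second claim, I would invoke the Frobenius endomorphism of $\F_{q^2}$. Since $q$ is a power of $2$ and the base field has characteristic $2$, the map $x\mapsto x^q$ is additive, so for any $k\geq 0$
\[
P_{kq} \;=\; \sum_{i=0}^{q} \sigma_i^{kq} \;=\; \Bigl(\sum_{i=0}^{q} \sigma_i^{k}\Bigr)^{\! q} \;=\; P_k^{\,q}.
\]
Applying the first part, $P_k^{\,q} = \alpha^{q^2 k}$, and because $\alpha \in \F_{q^2}$ we have $\alpha^{q^2} = \alpha$, so $P_{kq} = \alpha^k$, completing the argument for $0\leq k<q$.

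I do not anticipate any serious obstacle here: the only subtlety is making sure the Newton recurrence is used in the range $k\leq q+1$ (the number of roots), which is fine since we only need $k<q$, and remembering that $q+1$ is odd so $P_0 = 1$ in $\F_{q^2}$. Everything else is forced by $c_j = 0$ for $1<j<q$ and by Frobenius.
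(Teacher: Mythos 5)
Your proposal is correct and follows essentially the same route as the paper's proof: Newton's identities with the vanishing of $c_2,\dots,c_{q-1}$ to get the recurrence $P_k=\alpha^q P_{k-1}$, followed by Frobenius ($x\mapsto x^q$) for $P_{kq}=P_k^q$ and the fact that $\alpha^{q^2}=\alpha$. You are slightly more explicit about the base cases and the range of validity of Newton's identity, but the argument is the same.
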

\begin{proof}
Since $q$ is even, we have $P_0=1.$ 
Take $1\leq k < q.$ Newton's identities imply that 
$$ k c_k = \sum_{i=1}^k (-1)^{i-1} c_{k-i} P_k, $$
and 
replacing the $c_i$ with the values given in \eqref{eq:c0}-\eqref{eq:cq1},
we see that for $0\leq k<q,$ $P_k = \alpha^q P_{k-1}.$ 
Thus $\displaystyle P_k = \alpha^{qk}.$

Because we are working over $\F_{q^2}$, 
\[ P_{kq}=\sum_{i=0}^{q}\sigma_i^{kq} =\left(\sum_{i=0}^{q}\sigma_i^k\right)^{q}=\left(\alpha^{qk}\right)^q=\alpha^{k},\]
which completes the proof.
\end{proof}

We recall the \em Kronecker product \em of two matrices.
\begin{definition}
Let $A=[a_{ij}]$ be an  $r\times s$ matrix and $B=[b_{ij}]$ an $m_1\times m_2$ matrix. The Kronecker product of $A$ and $B$ is the
$rm_1\times sm_2$ matrix that can be expressed in block form as
\[ A\otimes B=\left(\begin{array}{cccc} a_{11}B & a_{12}B & \cdots & a_{1s}B \\ a_{21}B & a_{22}B & \cdots & a_{2s}B \\
\vdots & \vdots &  & \vdots \\ a_{r1}B & a_{r2}B & \cdots & a_{rs}B \\ \end{array}\right).\]
\end{definition}

\begin{proposition}\label{04.11.20}
Assume $q=2^\ell.$ Then 
\[\displaystyle \left(\begin{array}{cccc} P_0 & P_{q} & \cdots & P_{(q-1)q}  \\P_1 & P_{q +1} & \cdots & P_{(q-1)q+1}
\\ \vdots & \vdots & &\vdots  \\ P_{q-1} & P_{2q-1} & \cdots& P_{q^2-1} \end{array}\right)=
\left(\begin{array}{c c} 1 & \alpha^{2^{\ell-1}} \\ \alpha^{(2^{\ell-1})q} & \gamma^{2^{\ell-1}} \end{array}\right)
\otimes
\cdots
\otimes
\left(\begin{array}{c c} 1 & \alpha^2 \\ \alpha^{2q} & \gamma^2 \end{array}\right)
\otimes
\left(\begin{array}{c c} 1 & \alpha \\ \alpha^{q} & \gamma \end{array}\right).\]
\end{proposition}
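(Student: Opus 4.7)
The plan is to compute each entry $P_{kq+j}$ explicitly and match it against the corresponding entry of the Kronecker product. The first step is to parametrize the roots $\sigma_0, \ldots, \sigma_q$ of $p(t)$. In characteristic 2 one verifies $p(t) = (t+\alpha^q)(t^q+\alpha) + \lambda$ where $\lambda := \alpha^{q+1}+\gamma$, and $t^q+\alpha = (t+\alpha^q)^q$ since $\alpha^{q^2} = \alpha$. Hence $\tau_i := \sigma_i + \alpha^q$ satisfies $\tau_i^{q+1} = \lambda$, and because the line is non-tangent, $\lambda \neq 0$. Since the $\tau_i$ are $q+1$ distinct elements, they exhaust the $(q+1)$-th roots of $\lambda$, so I may write $\sigma_i = \alpha^q + \mu\omega^i$ for some $\mu \in \F_{q^2}$ with $\mu^{q+1} = \lambda$ and $\omega \in \F_{q^2}$ a primitive $(q+1)$-th root of unity.

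Next I would expand $P_m = \sum_i \sigma_i^m$ via the binomial theorem,
\[ P_m = \sum_{n=0}^m \binom{m}{n}\, \alpha^{q(m-n)} \mu^n \sum_{i=0}^q \omega^{in}, \]
noting that the inner sum is $1$ if $(q+1)\mid n$ (since $q+1$ is odd) and $0$ otherwise, so only indices $n = r(q+1) = rq + r$ contribute. For $m = kq+j$ with $0 \le j, k \le q-1$, the bits of $m$ consist of $j$'s bits in positions $0, \ldots, \ell-1$ and $k$'s bits in positions $\ell, \ldots, 2\ell-1$, while the bits of $n = rq+r$ are $r$'s bits repeated in both halves. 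By Lucas' theorem in characteristic $2$, $\binom{m}{n}\neq 0$ iff $r$ is a bitwise submask of both $j$ and $k$, equivalently $r \preceq s$ for $s := j \text{ AND } k$. Using $\mu^{r(q+1)} = \lambda^r$ and $\alpha^{q^2} = \alpha$, this yields
\[ P_{kq+j} = \sum_{r \preceq s} \lambda^r \alpha^{k-r} \alpha^{q(j-r)} = \alpha^{k+qj} \sum_{r \preceq s}\bigl(1 + \gamma \alpha^{-(q+1)}\bigr)^r. \]

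Applying Lucas a second time to expand $(1+\eta)^r = \sum_{r' \preceq r} \eta^{r'}$ and swapping sums gives $\sum_{r' \preceq s} \eta^{r'} \cdot 2^{|s|-|r'|}$, where $|\cdot|$ is the popcount; in characteristic 2 only $r' = s$ survives, so $\sum_{r \preceq s} (1+\eta)^r = \eta^s$. Therefore
\[ P_{kq+j} = \alpha^{k+qj}\bigl(\gamma\alpha^{-(q+1)}\bigr)^s = \alpha^{(k-s) + q(j-s)} \gamma^s. \]
Finally, since each $A_i$ is the entrywise $2^i$-th power of $A_0$, the $(j,k)$-entry of the Kronecker product is $\prod_{i=0}^{\ell-1} A_0[j_i, k_i]^{2^i}$; grouping bits by the four possible values of $(j_i, k_i) \in \{0,1\}^2$ gives exactly $\alpha^{k-s}\alpha^{q(j-s)}\gamma^s$, matching the formula above. (The degenerate case $\alpha = 0$ is handled directly from $p(t) = t^{q+1} + \gamma$.) The main obstacle is the careful bookkeeping involved in combining the divisibility constraint $(q+1)\mid n$ with two applications of Lucas' theorem; the conceptual key is the explicit parametrization of the roots via $(q+1)$-th roots of unity, which converts the power-sum calculation into a clean combinatorial identity about bitwise submasks in characteristic 2.
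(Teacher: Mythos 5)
Your proof is correct, and it takes a genuinely different route from the paper's. The paper's proof extracts from $p(\sigma_i)=0$ the three-term recurrence $P_k = \alpha^q P_{k-1}+\alpha P_{k-q}+\gamma P_{k-q-1}$ (equation~\eqref{04.08.20}), observes that this recurrence together with the first row and column determines the whole $q\times q$ matrix, computes that boundary via Newton's identities (Lemma~\ref{04.07.20}), and then shows by an induction with a five-case analysis of $2\times 2$ blocks that the iterated Kronecker product satisfies the same local relation~\eqref{04.09.20}. You instead solve for the roots explicitly, writing $\sigma_i=\alpha^q+\mu\omega^i$ with $\mu^{q+1}=\alpha^{q+1}+\gamma$ and $\omega$ a primitive $(q+1)$-th root of unity, expand the power sums, annihilate all contributions with $(q+1)\nmid n$ via the geometric sum of nontrivial roots of unity (using that $q+1$ is odd), and then apply Lucas' theorem twice to collapse what remains into the closed form $P_{kq+j}=\alpha^{(k-s)+q(j-s)}\gamma^{s}$, where $s$ is the bitwise AND of $j$ and $k$; this matches the Kronecker-product entry directly. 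The tradeoffs: your route produces an explicit formula for every $P_m$ (absent from the paper) and makes the tensor structure transparent---it is literally the carry-free binary arithmetic of characteristic $2$ surfacing through Lucas---while the paper's recurrence-plus-boundary argument avoids any division by $\alpha$ and does not need the parametrization of roots, so it is a bit more robust even if less illuminating. Two small notes on yours: the factorization $\alpha^{(k-r)+q(j-r)}=\alpha^{(k-s)+q(j-s)}\alpha^{(q+1)(s-r)}$ lets you run the identical double-Lucas/popcount argument on $\sum_{r\preceq s}(u+v)^r u^{s-r}=v^s$ without ever introducing $\alpha^{-(q+1)}$, which folds the $\alpha=0$ case in automatically rather than treating it separately; and your parametrization silently uses the section's standing non-tangency hypothesis $\alpha^{q+1}+\gamma\neq 0$, which is fine here but worth stating, since the paper's argument does not need it.
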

\begin{proof}
Denote by $\Gamma_q$ the matrix of the left side and by $\Gamma_q^\prime$ the matrix of the right side of the proposed equality.
For a root $\sigma$ of $p(t) = t^{q+1} + \alpha^q t^q + \alpha t + \gamma,$ 
$\sigma^k=\alpha^q\sigma^{k-1}+ \alpha \sigma^{k-q}+\gamma \sigma^{k-q-1}$ for $k\geq q+1.$
Thus we obtain that the $P_k$ values satisfy the recurrence relation
\begin{equation}\label{04.08.20}
 P_k = \alpha^q P_{k-1} + \alpha P_{k-q} + \gamma P_{k-q-1}. 
\end{equation}
As a consequence, the $(i,j)$ entry on the matrix $\Gamma_q$ depends on the $(i-1,j), (i,j-1), (i-1, j-1)$ entries of $\Gamma_q$.  This implies that
the matrix $\Gamma_q$ is fully determined by its first row and its first column.
It is clear that the first row of $\Gamma_q^\prime$ is $(1,\alpha,\ldots,\alpha^q)$ and the first column of $\Gamma_q^\prime$ is
$(1,\alpha^q,\ldots,\alpha^{q(q-1)})^T.$  
Moreover, 
Lemma~\ref{04.07.20} implies that the same is true for $\Gamma_q$; thus the first rows and first columns of $\Gamma_q$ and $\Gamma_q^\prime$
are the same. 
In order to show that $\Gamma_q = \Gamma_q^\prime$,
we just need to verify that matrix $\Gamma_q^\prime$ satisfies~\eqref{04.08.20}. 
It is equivalent to show that every $2 \times 2$ block $M$ of $\Gamma_q^\prime$ satisfies the relation
\begin{equation}\label{04.09.20}
M_{22}=\alpha^qM_{12}+\alpha M_{21}+\gamma M_{11}.
\end{equation}
We proceed by induction. 
It is clear that the matrix $\left(\begin{array}{c c} 1 & \alpha \\ \alpha^{q} & \gamma \end{array}\right)$ satisfies~\eqref{04.09.20}.
Let $i > 1$ and assume that every $2\times 2$ block of the matrix
\[\displaystyle
{B}=\left(\begin{array}{c c} 1 & \alpha^{2^{i-1}} \\ \alpha^{(2^{i-1})q} & \gamma^{2^{i-1}} \end{array}\right)
\otimes
\cdots
\otimes
\left(\begin{array}{c c} 1 & \alpha^2 \\ \alpha^{2q} & \gamma^2 \end{array}\right)
\otimes
\left(\begin{array}{c c} 1 & \alpha \\ \alpha^{q} & \gamma \end{array}\right)\]
satisfies~\eqref{04.09.20}. We will show that the matrix 
$\left(\begin{array}{c c} 1 & \alpha^{2^{i}} \\ \alpha^{(2^{i})q} & \gamma^{2^{i}} \end{array}\right)
\otimes {B}$ satisfies~\eqref{04.09.20}.
Observe that the first row, first column, last row and last column of {B} are as shown:
\begin{equation}\label{04.12.20}
\left(\begin{array}{cccccc}
1 & \alpha & \alpha^2 & \ldots & & \alpha^{2^i-1} \\
\alpha^q &  &  &  & & \alpha^{2^i-2}\gamma \\
\alpha^{2q} &  &  &  & & \alpha^{2^i-3}\gamma^2 \\
\vdots &  &  &  & & \vdots \\
 &  &  &  & &  \\
\alpha^{(2^i-1)q} & \alpha^{(2^i-2)q}\gamma & \alpha^{(2^i-3)q}\gamma^2 & \ldots & & \gamma^{2^i-1}\end{array}\right).
\end{equation}
Now it is straightforward to check that the matrix
$\left(\begin{array}{c c} 1 & \alpha^{2^{i}} \\ \alpha^{(2^{i})q} & \gamma^{2^{i}} \end{array}\right)
\otimes {B}=\left(\begin{array}{c|c}{B} & \alpha^{2^i}{B} \\\hline  \alpha^{(2^i)q}{B} & \gamma^{2^i}{B}\end{array}\right)$
satisfies the desired property (\ref{04.09.20}). Indeed, take any $2 \times 2$ block $M.$ If $M$ belongs to any of the four blocks
${B}, \alpha^{2^i}{B}, \alpha^{(2^i)q}{B}$ or $\gamma^{2^i}{B}$, then we are finished by induction.
Otherwise, we have the following five cases:

\begin{itemize}
\item[(i)]
When
$M$ intersects the blocks 
${B}$ and $\alpha^{2^i}{B},$ 
$M=\left(\begin{array}{cc}
\alpha^{2^i-j}\gamma^{j-1} & \alpha^{2^i} \alpha^{jq}\\
\alpha^{2^i-{(j+1)}}\gamma^{j} & \alpha^{2^i} \alpha^{(j+1)q}
\end{array}\right)$ for some $j$.

\item[(ii)] 
When
$M$ intersects the blocks 
${B}$ and $ \alpha^{(2^i)q}{B},$
$M=\left(\begin{array}{cc}
\alpha^{(2^i-j)q}\gamma^{j-1} & \alpha^{(2^i-(j+1))q}\gamma^{j} \\
\alpha^j \alpha^{(2^i)q}& \alpha^{j+1} \alpha^{(2^i)q}
\end{array}\right)$ for some $j$.

\item[(iii)] When $M$ intersects the blocks 
$\alpha^{2^i}{B}$ and $\gamma^{2^i}{B},$
$M=\left(\begin{array}{cc}
\alpha^{2^i}\alpha^{(2^i-j)q}\gamma^{j-1} & \alpha^{2^i}\alpha^{(2^i-(j+1))q}\gamma^{j} \\
\alpha^j \gamma^{2^i}& \alpha^{j+1} \gamma^{2^i}
\end{array}\right)$ for some $j$.

\item[(iv)] When $M$ intersects the blocks
$ \alpha^{(2^i)q}{B}$ and $\gamma^{2^i}{B},$
$M=\left(\begin{array}{cc}
\alpha^{(2^i)q}\alpha^{2^i-j}\gamma^{j-1} & \gamma^{2^i} \alpha^{jq} \\
\alpha^{(2^i)q}\alpha^{2^i-{(j+1)}}\gamma^{j} & \gamma^{2^i} \alpha^{(j+1)q}
\end{array}\right)$ for some $j$.

\item[(v)] When $M$ intersects the four blocks---that is, $M$ is the $2 \times 2$ matrix in the center---we have
$M=\left(\begin{array}{cc}\gamma^{2^i-1} & \alpha^{2^i}\alpha^{(2^i-1)q} \\\alpha^{(2^i)q} \alpha^{2^i-1} & \gamma^{2^i}\end{array}\right).$
\end{itemize}

It is not hard to check that in all five cases, we have
$M_{22}=\alpha^qM_{12}+\alpha M_{21}+\gamma M_{11}.$
\end{proof}

Finally, we are ready to prove Theorem~\ref{lem:tpower}, stated below, which provides a sufficient condition for $\dab(t^k) < q$.

\begin{theorem} \label{lem:tpower}
Assume $q=2^\ell.$ Let $0 \leq k < q^2$, and write $k = wq + z $ where  $z < q.$   Let $\alpha, \beta \in \F_{q^2}$.
Suppose that either $w=0$, or that there exists $1 \leq i \leq \ell$ such that $w \equiv 0 \mod 2^i$ and $z \not\equiv -1 \mod 2^i$.
Then $\dab(t^k) < q$.
\end{theorem}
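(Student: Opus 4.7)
The overall plan is to reduce the claim via Proposition~\ref{04.10.20} to the identity $P_{k+1} = \alpha^q P_k$, and then to exploit the Kronecker-product description of the matrix $\Gamma_q$ from Proposition~\ref{04.11.20}. Writing $k = wq + z$ with $0 \le w, z < q$, the value $P_k$ is the $(z, w)$ entry of $\Gamma_q$, and the Kronecker decomposition expresses this as
\[ P_{wq+z} \;=\; \prod_{s=0}^{\ell-1} E_s(z_s, w_s), \]
where $w = \sum_s w_s 2^s$ and $z = \sum_s z_s 2^s$ are the binary expansions, and $E_s(\cdot,\cdot)$ denotes the entries of the $s$-th Kronecker factor (so $E_s(0,0) = 1$, $E_s(0,1) = \alpha^{2^s}$, $E_s(1,0) = \alpha^{2^s q}$, and $E_s(1,1) = \gamma^{2^s}$). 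It then suffices to show that the ratio of these products at $k+1$ and $k$ equals $\alpha^q$.

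The case $w = 0$ follows directly from Lemma~\ref{04.07.20}, which gives $P_k = \alpha^{qz}$; for $z < q-1$ incrementing $z$ multiplies $P_k$ by $\alpha^q$, and for the boundary $z = q-1$ one checks $P_q = \alpha = \alpha^{q^2} = \alpha^q \cdot \alpha^{q(q-1)} = \alpha^q P_{q-1}$, using $\alpha \in \F_{q^2}$. For the main case, fix $i$ as in the hypothesis and let $r$ denote the length of the trailing run of ones in the binary expansion of $z$. The hypothesis $z \not\equiv -1 \pmod{2^i}$ forces $r < i \le \ell$, so $z+1 < q$ and incrementing $k$ does not propagate a carry into the digits of $w$. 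The hypothesis $w \equiv 0 \pmod{2^i}$ gives $w_s = 0$ for every $s < i$, and in particular for every index $s \in \{0, 1, \ldots, r\}$ at which the binary expansions of $z$ and $z+1$ disagree. The ratio $P_{k+1}/P_k$ therefore factors into $r+1$ nontrivial terms, all in column $w_s = 0$ of their respective Kronecker factors: for $s < r$ a bit flip $1 \to 0$ contributes $E_s(0,0)/E_s(1,0) = \alpha^{-2^s q}$, and at $s = r$ a bit flip $0 \to 1$ contributes $\alpha^{2^r q}$. Telescoping gives $\alpha^{(2^r - (2^r-1))q} = \alpha^q$, as required.

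The main obstacle will be the bookkeeping needed to align the Kronecker-product ordering of Proposition~\ref{04.11.20} with the binary digits of $z$ and $w$, and to verify that the two congruence conditions on $i$ are exactly what guarantees that every digit position which flips when passing from $z$ to $z+1$ lies in a column indexed by a zero digit of $w$. Once those alignments are set up, the ratio computation itself collapses to a short telescoping sum in the exponent, and the boundary subcase inside $w = 0$ is the only place where the containment $\alpha \in \F_{q^2}$ must be explicitly invoked.
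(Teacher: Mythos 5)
Your proposal is correct and follows the same overall route as the paper: reduce to the identity $P_{k+1} = \alpha^q P_k$ via Proposition~\ref{04.10.20}, then use the Kronecker-product description of $\Gamma_q$ from Proposition~\ref{04.11.20}. Where you differ is in the final bookkeeping. The paper splits $\Gamma_q = A \otimes B$ with $B$ the tensor product of the lowest $i$ factors, observes that $w \equiv 0 \pmod{2^i}$ places $P_k$ in the first column of its $B$-block and $z \not\equiv -1 \pmod{2^i}$ keeps $P_{k+1}$ in the same block, and then reads off $P_{k+1} = \alpha^q P_k$ directly from the geometric-progression structure of $B$'s first column displayed in~\eqref{04.12.20}. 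You instead fully distribute the Kronecker product into the entrywise formula $P_{wq+z} = \prod_{s} E_s(z_s,w_s)$ and compare factor by factor across a binary carry; this makes the exponent telescope $(2^r) - (2^r - 1) = 1$ explicit and is, if anything, a little more transparent. Two small remarks. First, you phrase the comparison as a ratio $P_{k+1}/P_k$, which is literally undefined when $\alpha = 0$ (there $E_s(1,0) = \alpha^{2^s q} = 0$ and some $P_k$ can vanish); the argument goes through unchanged if you instead compare the two products $P_{k+1} = \prod_s E_s((z+1)_s, w_s)$ and $\alpha^q P_k = \alpha^q\prod_s E_s(z_s,w_s)$ directly, as the paper's block formulation implicitly does. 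Second, your handling of the boundary $z = q-1$ inside the $w=0$ case, invoking $\alpha^{q^2} = \alpha$ to reconcile $P_q = \alpha$ with $\alpha^{q\cdot q}$, is actually a step the paper glosses over by asserting $P_k = \alpha^{qk}$ for $0\le k\le q$ when Lemma~\ref{04.07.20} only states it for $k<q$; your version is the more careful one there.
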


\begin{proof}[Proof of Theorem~\ref{lem:tpower}]
Suppose that $k = wq + z$ as in the theorem statement.
By Proposition~\ref{04.10.20},
 we just need to check that $P_{k+1}=\alpha^q P_k.$ 
When $w=0,$ it is clear that $P_{k+1}=\alpha^q P_k$ because 
by Lemma~\ref{04.07.20}, for $0 \leq k \leq q$, $\displaystyle P_k = \alpha^{qk}.$ 

Suppose that there exists an $i$ so that $w \equiv 0 \mod 2^i$ and $z \not\equiv -1 \mod 2^i$. Then let
\[\displaystyle
{A}=\left(\begin{array}{c c} 1 & \alpha^{2^{\ell-1}} \\ \alpha^{(2^{\ell-1})q} & \gamma^{2^{\ell-1}} \end{array}\right)
\otimes
\cdots
\otimes
\left(\begin{array}{c c} 1 & \alpha^{2^{i}} \\ \alpha^{(2^{i})q} & \gamma^{2^{i}} \end{array}\right)\text{ and }
{B}=\left(\begin{array}{c c} 1 & \alpha^{2^{i-1}} \\ \alpha^{(2^{i-1})q} & \gamma^{2^{i-1}} \end{array}\right)
\otimes
\cdots
\otimes
\left(\begin{array}{c c} 1 & \alpha \\ \alpha^{q} & \gamma \end{array}\right),
\]
so that $A \in \F_{q^2}^{2^{\ell -i } \times 2^{\ell - i }}$ and $B \in \F_{q^2}^{2^i \times 2^i}$.
By Proposition~\ref{04.11.20}
\[\displaystyle \left(\begin{array}{cccc} P_0 & P_{q} & \cdots & P_{(q-1)q}  \\P_1 & P_{q +1} & \cdots & P_{(q-1)q+1}
\\ \vdots & \vdots & &\vdots  \\ P_{q-1} & P_{2q-1} & \cdots& P_{q^2-1} \end{array}\right)=
A\otimes B=\left(\begin{array}{cccc} a_{11}B & a_{12}B & \cdots & a_{1s}B \\ a_{21}B & a_{22}B & \cdots & a_{2s}B \\
\vdots & \vdots &  & \vdots \\ a_{s1}B & a_{s2}B & \cdots & a_{ss}B \\ \end{array}\right)\]
where $s = 2^{\ell -i }$. 
Suppose that $P^k$ lies in the block $a_{cd}B$ for some $c,d \in \{1, \ldots, 2^{\ell-i}\}$. 
The fact that $w \equiv 0 \mod 2^i$ means that the element $P_k$ is in the first column of the block $a_{cd}B.$
The fact that $z \not\equiv -1 \mod 2^i$
means that $P_k$ is not in the last row of the block $a_{cd}B.$  In particular, $P_{k+1}$ is also in the block $a_{cd}B$.  Because of the structure of the first column of $B$, shown in \eqref{04.12.20}, we have $P_{k+1} = \alpha^q P_k$. Thus by Proposition~\ref{04.10.20}, we have $\dab(t^k) < q$.
\end{proof}

\subsection{Bound on the rate of the code}\label{ssec:rate}
In this section we use Theorem~\ref{lem:tpower} in order to bound the rate of our code construction below, completing the proof of Theorem~\ref{thm:main}. 

As discussed above, the strategy will be to find a large set of monomials $M_{a,b}(x,y) = x^ay^b$ for $a \leq q-1$ and $b \leq q^2 - 1$ so that $M_{a,b}$ is good, and hence, by Observation~\ref{obs:good},  $M_{a,b} \in \mathcal{F}$.  Since such monomials lead to linearly independent codewords by Proposition~\ref{prop:linind}, this will give us a lower bound on the dimension of $\cC$.


%

If $a + b < q$, then clearly $M_{a,b}$ is good.  Indeed, in this case $\deg(M_{a,b} \circ L_{\alpha, \beta})<q$ for all $\alpha,\beta$, and so reducing modulo $\pab$ does not change this.

If $a + b \geq q$,
there are two mechanisms that contribute to $M_{a,b}(x,y)$ being good.  
To see this, we may expand $M_{a,b} \circ L_{\alpha, \beta}$ as follows:
\begin{equation}\label{eq:expansion}
(M_{a,b} \circ L_{\alpha,\beta})(t) = M_{a,b}( \alpha t + \beta, t ) = (\alpha t + \beta)^a t^b
= \sum_{j \leq a} {a \choose j} \alpha^j \beta^{a-j} t^{b+j}. 
\end{equation}
The first mechanism that can contribute to the goodness of $M_{a,b}$ is that the terms $t^{b+j}$ in
\eqref{eq:expansion} could have small degree mod $\pab(t)$, such as per Theorem~\ref{lem:tpower}.  The second mechanism is that the binomial coefficients ${a \choose j}$ could vanish mod $2$.  To understand this second mechanism, we will use Lucas' Theorem, stated below.

\begin{definition}
Let $a$ and $b$ be integers between $0$ and $2^d-1$, and let $\bin(a) \in \{0,1\}^d$ denote the binary expansion of $a$.
We say that $a$ \textbf{lies in the $2$-shadow of $b$}, denoted $a \leq_2 b$, if 
\[ \supp(\bin(a)) \subseteq \supp(\bin(b)). \]
\end{definition}

\begin{theorem}[Lucas]
Let $0 \leq a \leq b$ be integers.  Then ${b \choose a}$ is zero mod $2$ if and only if $a \nleq_2 b$, meaning $a$ does not lie in the $2$-shadow of $b$. 
\end{theorem}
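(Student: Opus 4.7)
The plan is to prove this classical result by working with generating functions in $\F_2[x]$. The crucial ingredient is the Frobenius identity in characteristic $2$, namely $(1+x)^{2^i} \equiv 1 + x^{2^i} \pmod{2}$ for every nonnegative integer $i$. This follows by an easy induction: the base case is $(1+x)^2 = 1 + 2x + x^2 \equiv 1 + x^2 \pmod 2$, and then $(1+x)^{2^{i+1}} = ((1+x)^{2^i})^2 \equiv (1+x^{2^i})^2 \equiv 1 + x^{2^{i+1}} \pmod 2$.

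First, I would write $b$ in binary as $b = \sum_{i=0}^{d-1} b_i 2^i$ with $b_i \in \{0,1\}$ and apply the Frobenius identity termwise to obtain
\[ (1+x)^b = \prod_{i=0}^{d-1} \left((1+x)^{2^i}\right)^{b_i} \equiv \prod_{i \,:\, b_i = 1} \left(1 + x^{2^i}\right) \pmod{2}. \]

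Second, I would identify the coefficient of $x^a$ on both sides. Expanding the product on the right, the only monomials that appear are of the form $x^{\sum_{i \in S} 2^i}$ for subsets $S \subseteq \supp(\bin(b))$, and by the uniqueness of binary representation each such monomial appears with coefficient exactly $1$ (no cancellation occurs). Hence the coefficient of $x^a$ in the product is $1$ mod $2$ if and only if $a = \sum_{i \in S} 2^i$ for some $S \subseteq \supp(\bin(b))$, i.e., precisely when $\supp(\bin(a)) \subseteq \supp(\bin(b))$, which is the definition of $a \leq_2 b$. On the other hand, by the binomial theorem the coefficient of $x^a$ in $(1+x)^b$ is $\binom{b}{a}$. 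Equating coefficients mod $2$ yields $\binom{b}{a} \not\equiv 0 \pmod 2$ iff $a \leq_2 b$, which is the contrapositive of the desired statement.

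The result is classical and there is no real obstacle; the only subtle point to emphasize is the uniqueness of binary expansions, which guarantees that no two distinct subsets $S \subseteq \supp(\bin(b))$ contribute to the same monomial and therefore that no cancellation occurs in the expanded product over $\F_2$.
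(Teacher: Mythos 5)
Your generating-function argument is correct: it is the standard characteristic-$2$ proof of Lucas' theorem via the Frobenius identity $(1+x)^{2^i}\equiv 1+x^{2^i}\pmod 2$, and the key observation about uniqueness of binary expansions preventing cancellation is exactly what makes the coefficient comparison go through. Note, however, that the paper itself offers no proof of this statement; it simply cites Lucas' theorem as a classical fact and invokes it as a black box, so there is no in-paper argument to compare against.
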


Before continuing, we give an example to illustrate how both mechanisms come into play. 
\begin{example}
Let $q=4$ and consider $M_{2,8}(x,y) = x^2y^{8}$.  This is a high-degree polynomial on the curve $\mathcal{X}$.  However, on every line $L_{\alpha,\beta}(t) = (\alpha t + \beta, t)$,  we have
\begin{align*}
 M_{2,8}(L_{\alpha,\beta}(t)) &= (\alpha t + \beta)^2 t^{8} \\
&= (\alpha^2 t^2 + \beta^2) t^{8} \\
&= \alpha^2 t^{10} + \beta^2 t^{8}. 
\end{align*}
In the second line above when the cross-terms $\beta \alpha t + \beta \alpha t = 0$ canceled, Lucas' theorem was in action, as the binomial coefficient ${2 \choose 1}$ vanishes.
Now in the third line, we are left with the two monomials $t^{10}$ and $t^8$. 
By Theorem~\ref{lem:tpower}, both of these reduce to something of degree less than $q$.
Indeed, we have $10 = 2+2\cdot q.$ As $2\equiv 0 \mod 2$ and $2\not\equiv -1 \mod 2,$ we have $\dab(t^{10})  < q.$
We have that $8 = 0+2 \cdot q.$ As $0\equiv0 \mod 2$ and $2\not\equiv -1 \mod 2,$ we obtain $\dab(t^{8})  < q.$
We conclude that $\dab(M_{2,8}(L_{\alpha,\beta}(t)))  < q$, and hence $M_{2,8}$ is good.

Notice that both mechanisms were important here.  In particular,  if the binomial coefficient ${2 \choose 1}$ had not disappeared, we would be left with a term $t^9$.  One can check that $\dab(t^9) = 4$ for some $\alpha,\beta$, and so this would result in a $t^4$ term in $(M_{2,8} \circ L_{\alpha,\beta})(t) \mod \pab(t)$ and $M_{a,b}$ would not be good.
\end{example}

Finally, we can prove our main theorem, Theorem~\ref{thm:main}, which says that the rate of a \ourcode\ is bounded below by a positive constant.
\begin{proof}[Proof of Theorem~\ref{thm:main}]

As per Observation~\ref{obs:good}, we will come up with a large set of monomials $M_{a,b}$ that are good.  By Proposition~\ref{prop:linind}, the resulting codewords are linearly independent, and this will yield a lower bound on the dimension of $\cC$.

Let $q = 2^\ell$ as in the theorem statement.  Below, for an integer $x$, write $x = \sum_r x_r 2^r$, so that $x_r$ denotes the $r$'th least significant bit in the binary expansion of $x$.

\begin{claim}\label{cl:good}
Suppose that $a \leq q-1$ and $b \leq q^2 - 1$ satisfy the following properties.
\begin{itemize}
\item[(i)] $b = wq + b'$ for some $w < q$ and some $b' < 2^{\ell-1}$, so that $w \equiv 0 \mod 2^i$ for some $1 \leq i \leq \ell$;
\item[(ii)] $a < 2^{\ell-1}$;
\item[(iii)] there is some $0 \leq s \leq i-1$ so that  $a_s = b'_s = 0$. 
\end{itemize}
Then $M_{a,b}$ is good.
\end{claim}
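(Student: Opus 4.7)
The plan is to expand $M_{a,b} \circ L_{\alpha,\beta}$ via \eqref{eq:expansion} and show that every surviving monomial $t^{b+j}$ satisfies the hypotheses of Theorem~\ref{lem:tpower}. First I would observe that since $a, b' < 2^{\ell-1}$ by (i) and (ii), for every $0 \leq j \leq a$ we have $b' + j < 2^{\ell} = q$, so the decomposition in the sense of Theorem~\ref{lem:tpower} is simply $b + j = wq + (b'+j)$ with the same $w$ coming from (i). If $w = 0$, Theorem~\ref{lem:tpower} immediately applies to every such monomial and we are done; otherwise $w \equiv 0 \pmod{2^i}$ by (i), and it suffices to verify that $b' + j \not\equiv -1 \pmod{2^i}$ for every $j$ in the range of summation whose binomial coefficient survives.

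Next I would use Lucas' theorem to discard all terms with $\binom{a}{j} \equiv 0 \pmod 2$: the only $j$ we need to worry about are those with $j \leq_2 a$. For any such $j$, bit $s$ of $j$ satisfies $j_s \leq a_s = 0$, so $j_s = 0$; together with $b'_s = 0$ from (iii), this is the key input. Setting $B := b' \bmod 2^{s+1}$ and $J := j \bmod 2^{s+1}$, the fact that $b'_s = j_s = 0$ forces $B \leq 2^s - 1$ and $J \leq 2^s - 1$. Since $s \leq i-1$, a congruence $b' + j \equiv -1 \pmod{2^i}$ would imply $b' + j \equiv -1 \pmod{2^{s+1}}$, and hence $B + J = 2^{s+1} - 1$ (no carry is possible because $B + J < 2^{s+1}$). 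But $B + J \leq 2(2^s - 1) = 2^{s+1} - 2$, a contradiction. Therefore $b' + j \not\equiv -1 \pmod{2^i}$.

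Combining these two steps, Theorem~\ref{lem:tpower} gives $\dab(t^{b+j}) < q$ for every $j \leq_2 a$. Since reduction modulo $\pab(t)$ is linear and the terms with $\binom{a}{j} \equiv 0 \pmod 2$ vanish in characteristic $2$, we conclude that
\[ \dab\!\left( \sum_{j \leq a} \binom{a}{j}\alpha^j \beta^{a-j} t^{b+j} \right) < q, \]
which is exactly the statement that $M_{a,b} \circ L_{\alpha,\beta}$ reduces to a polynomial of degree less than $q$ modulo $\pab(t)$. As this holds for every $(\alpha,\beta)$, $M_{a,b}$ is good.

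The main obstacle is the carry analysis in the middle step: one has to convert the bitwise hypothesis $a_s = b'_s = 0$, which is not stable under addition, into a genuine non-congruence of $b' + j$ modulo $2^i$. The slick point is that we do not need to control every bit of $b' + j$; it suffices to localize to residues modulo $2^{s+1}$ and then use the strict inequality $2^{s+1}-2 < 2^{s+1}-1$ to rule out the "all-ones" residue. Everything else is a routine combination of Lucas and the already-proved Theorem~\ref{lem:tpower}.
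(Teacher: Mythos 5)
Your proof is correct and follows the same skeleton as the paper's: apply Lucas' theorem to reduce to $j \leq_2 a$, show $b'+j \not\equiv -1 \pmod{2^i}$, and invoke Theorem~\ref{lem:tpower}. The one place you diverge is the arithmetic showing $b'+j \not\equiv -1 \pmod{2^i}$: the paper writes $b' = 2^{s+1}b'' + b'''$, $j = 2^{s+1}j'' + j'''$, introduces an auxiliary residue $c \le 2^i - 2^{s+1}$ for $2^{s+1}(b''+j'') \bmod 2^i$, and bounds $c + b''' + j''' < 2^i - 1$; you instead reduce everything directly modulo $2^{s+1}$, observe that $(b'+j) \bmod 2^{s+1} = B + J$ because no carry occurs past bit $s$, and rule out the all-ones residue $2^{s+1}-1$ by the bound $B + J \le 2^{s+1} - 2$. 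Your version is a touch cleaner since it avoids the auxiliary $c$. You also make explicit the check that $b' + j < q$, which is what justifies reading $b+j = wq + (b'+j)$ as the decomposition required by Theorem~\ref{lem:tpower}; the paper leaves this implicit. Both proofs are valid; the difference is cosmetic.
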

\begin{proof}
Suppose that $a,b$ satisfy (i)-(iii).
Let $L_{\alpha,\beta}(t) = (\alpha t + \beta, t)$ be a line in $\mathcal{L}$ and write
\begin{equation}\label{eq:exp2}
 (M_{a,b} \circ L_{\alpha, \beta})(t) = \sum_{j \leq a} {a \choose j} \alpha^j \beta^{a - j} t^{j+b} 
= \sum_{j \leq_2 a} \alpha^j \beta^{a-j} t^{j+b} 
\end{equation}
using Lucas' theorem in the second equality.
Notice that for any $j \leq_2 a$, we have $j < 2^{\ell-1}$ and $j_s = 0$, using properties (ii) and (iii).
Then the only monomials that appear in \eqref{eq:exp2} are of the form $t^k$ where
$k = wq + b' + j$
for $w,b'$ as in (i) and for $j \leq_2 a$.
Let $i$ be as in (i), so that $w \equiv 0 \mod 2^i$.
We claim that $b' + j \not\equiv -1 \mod 2^i$.  
Indeed, we can write
\[ b' = 2^{s+1} b'' + b''' \qquad \text{ and } j = 2^{s+1} j'' + j''' \]
for some $b''',j''' < 2^s$, using the fact that $b'_s = j_s = 0$.
Note that there exists some $c \leq 2^i - 2^{s+1}$ so that 
\[2^{s+1}(b'' + j'') \equiv c \mod 2^i. \]
(Indeed, this is true for any integer multiple of $2^{s+1}$.)
Thus,
\[ b' + j \equiv c + b''' + j''' \mod 2^i. \]
Since $b''', j''' < 2^s$, we have
\[ c + b''' + j''' < (2^i - 2^{s+1}) + (2^{s+1} - 1) = 2^i - 1, \]
which means that $b' + j \not\equiv -1 \mod 2^i$, as claimed.

Thus, $k$ is of the form $k = wq + z$ (where $z = b' + j$) so that $w \equiv 0 \mod 2^i$ and $z \not\equiv -1 \mod 2^i$.  By Theorem~\ref{lem:tpower}, $\dab(t^k) < q$.  Since this is true for every power $t^k$ that appears in \eqref{eq:exp2}, $\dab(M_{a,b} \circ L_{\alpha, \beta}) < q$ for all $\alpha, \beta$, and hence $M_{a,b}$ is good.
\end{proof} 

Finally, we count the number of pairs $a,b$ meeting the description in Claim~\ref{cl:good}.  We iterate over all $s$, where we take $s$ to be the smallest index so that $a_s = b'_{s} = 0$.  
For a given $s$, there are $4^s - 3^s$ ways to assign the bits $a_0, \ldots, a_{s-1}$ and $b'_0, \ldots, b'_{s-1}$, since there are only $3^s$ ways to never have $a_r = b'_r = 0$ for any $0 \leq r \leq s-1$.  Then there are $4^{\ell - s - 2}$ ways to assign the bits $a_{s+1}, \ldots, a_{\ell-2}, b'_{s+1}, \ldots, b'_{\ell-2}$.  Finally, there are $2^{\ell-s-1}$ ways to assign the bits $w_{s+1}, \ldots, w_{\ell-1}$.  Notice that we will choose $w_0, \ldots, w_{s} = 0$, ensuring that $w \equiv 0 \mod 2^{s+1}$ and in particular $w \equiv 0 \mod 2^i$ for some $i > s$. 

Thus, the total number of monomials meeting the description in Claim~\ref{cl:good} is 
\begin{align*}
\sum_{s=0}^{\ell-1} \inparen{ 4^s - 3^s } 4^{\ell - s - 2} 2^{\ell - s - 1}
&= \frac{2^{3\ell}}{32} \sum_{s=0}^{\ell-1} \inparen{ 4^s - 3^s } 8^{-s} \\
&= \frac{2^{3\ell}}{32} \cdot \frac{2}{5} \inparen{1 + 4 \inparen{\frac{3}{8}}^\ell - 5 \inparen{\frac{1}{2}}^\ell } \\
&\geq 0.007 \cdot q^3
\end{align*}
using the fact that $q = 2^\ell$ and the assumption that $\ell \geq 2$.
Since the length of $\cC$ is $q^3 = |\mathcal{X}|$, this implies that the rate of $\cC$ is at least $0.007$.
\end{proof}

We note that Claim~\ref{cl:good} does not take into account all of the good monomials; in particular, $M_{a,b}$ with  $a = b = 2^{\ell-1} - 1$ is a good monomial that is not covered.  In the examples in Section \ref{sec:examples}, we see higher rates. 

We conclude this section about the rate of the code with a very loose bound on another parameter, namely the minimum distance.

\begin{proposition}
The minimum distance $d$ of the code $\cC$ is bounded by $q^2\leq d\leq q^3-q^2+1$.  
\end{proposition}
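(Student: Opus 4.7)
The bound splits into two independent arguments: producing an explicit codeword of weight $q^3 - q^2 + 1$ for the upper half, and using the local agreement property underlying Observation~\ref{obs:local} to control the zero count of any nonzero codeword for the lower half.

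For the upper bound, I will work inside the subcode $C_{q,q^2-1} \subseteq \cC$ (noted just after Definition~\ref{def:C}). Choose any $q-1$ distinct elements $\alpha_1,\ldots,\alpha_{q-1} \in \F_{q^2} \setminus \F_q$, which is possible since $|\F_{q^2} \setminus \F_q| = q^2 - q \geq q - 1$, and set $f(x,y) = \prod_{i=1}^{q-1}(x - \alpha_i)$. For each $\alpha_i$, the points of $\mathcal{X}$ with $x$-coordinate $\alpha_i$ are the solutions $y \in \F_{q^2}$ of $y^{q+1} = \alpha_i^q + \alpha_i$; since $\alpha_i \notin \F_q$, the trace $\alpha_i^q + \alpha_i$ is a nonzero element of $\F_q$, and surjectivity of the norm map $\F_{q^2}^{\ast} \to \F_q^{\ast}$ with kernel of size $q+1$ yields exactly $q+1$ such $y$. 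Thus $f$ has $(q-1)(q+1) = q^2 - 1$ simple zeros on $\mathcal{X}$; since $f$ has no poles except at $P_\infty$ and the principal divisor has degree zero, $f$ has pole order exactly $q^2 - 1$ at $P_\infty$, placing it in $L((q^2-1)P_\infty) \subseteq \cC$. The corresponding codeword has weight $q^3 - (q^2-1) = q^3 - q^2 + 1$.

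For the lower bound, let $c \in \cC$ be nonzero, coming from $f \in \mathcal{F}$ not identically zero on $\mathcal{X}$, and fix $P = (x_0,y_0) \in \mathcal{X}$ with $f(P) \neq 0$. On each of the $q^2 - 1$ non-tangent lines $L \in \mathcal{L}$ through $P$, the restriction $f \circ L$ agrees on the $q + 1$ points of $L \cap \mathcal{X}$ with a polynomial $g_L \in \F_{q^2}[t]$ of degree at most $q - 1$; since $g_L$ takes the nonzero value $f(P)$ at one of these parameters, it is not the zero polynomial and hence has at most $q - 1$ roots. Consequently $f$ vanishes on at most $q - 1$ of the $q$ points of $L \cap \mathcal{X}$ other than $P$, and these zero sets are pairwise disjoint across different non-tangent lines through $P$ because any two such lines intersect only at $P$.

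Those $q^2 - 1$ non-tangent lines cover all of $\mathcal{X} \setminus \{P\}$ except the $q - 1$ additional points of $\mathcal{X}$ lying on the horizontal line $y = y_0$, which is not in $\mathcal{L}$; this horizontal line meets $\mathcal{X}$ in exactly $q$ points since $x^q + x = y_0^{q+1}$ is a trace equation with $q$ solutions in $\F_{q^2}$. Allowing all $q - 1$ of those extra points to be zeros, the total number of zeros of $f$ on $\mathcal{X}$ is at most $(q^2-1)(q-1) + (q-1) = (q-1)q^2 = q^3 - q^2$, which forces $\mathrm{wt}(c) \geq q^2$. The only delicate step is verifying the cover of $\mathcal{X} \setminus \{P\}$: through $P$ and any other $Q = (x_1, y_1) \in \mathcal{X}$ there is a unique line in $\mathbb{P}^2$, and it lies in $\mathcal{L}$ exactly when $y_1 \neq y_0$, in which case $Q$ appears on one of the $q^2 - 1$ non-tangent lines above.
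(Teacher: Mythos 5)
Your proof is correct, and it differs from the paper's in an interesting way on the upper bound. For the lower bound $d \geq q^2$, you are using essentially the same argument as the paper---the local recovery structure forces each of the $q^2-1$ disjoint recovery sets through a nonzero position to contain a nonzero symbol---but you phrase it as an upper bound on the zero count and therefore must also account for the $q-1$ points of $\mathcal{X}$ on the horizontal line $y=y_0$ that no $L\in\mathcal{L}$ covers, whereas the paper simply counts one nonzero symbol per recovery set plus the position itself, making the horizontal line irrelevant. For the upper bound $d\leq q^3-q^2+1$, the paper merely cites the known minimum distance of $C_{q,q^2-1}$ (Theorem 5 of Stichtenoth), while you construct an explicit extremal codeword $f=\prod_{i=1}^{q-1}(x-\alpha_i)$, which is more self-contained and makes the weight bound concrete. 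One caveat: your justification ``since $\alpha_i\notin\F_q$, the trace $\alpha_i^q+\alpha_i$ is a nonzero element of $\F_q$'' uses that $\ker(\mathrm{Tr})=\F_q$, which holds only in characteristic $2$; for general $q$ one should instead directly choose $q-1$ of the $q^2-q$ elements with nonzero trace. Since the paper's setting is $q$ a power of $2$, this does not affect correctness here, but it is worth stating the assumption. Your principal-divisor bookkeeping is right with the valuations $v_{P_\infty}(x)=-(q+1)$, $v_{P_\infty}(y)=-q$ appropriate to the affine model $x^q+x=y^{q+1}$, so $f$ indeed has pole order exactly $q^2-1$ at $P_\infty$ and $q^2-1$ simple affine zeros.
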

\begin{proof}
The upper bound given by the fact that $C_{q,q^2-1}$ is contained in $\cC$, and the minimum distance of $C_{q,q^2-1}$ is given in Theorem 5 of \cite{Stichtenoth_codes}.  The lower bound is based on our recovery procedure.  If $V$ is a codeword with a non-zero symbol in position $i$, this corresponds to a function $f_V$ which is non-zero on the point $P_i$.  Position $i$ has $q^2-1$ disjoint recovery sets, and for each recovery set, at least one symbol in $V$ must be non-zero (since the zero polynomial will be interpolated if all symbols in the recovery set positions are zero).  Thus any codeword with any non-zero symbol must have non-zero symbols in at least $q^2$ positions. \end{proof}

\section{Examples}\label{sec:examples}

In computed examples, the actual code $\cC$ has rate much higher than the asymptotic lower bound computed in Section \ref{ssec:rate}.  The following examples illustrate how much higher.

\subsection{The code $\cC$ when $q=4$}

When $q=4$, we work with the Hermitian curve $x^4+x=y^5$, which has 65 points over $\mathbb{F}_{16}$ including one point at infinity, giving a code of length $n=64$.  The code $\cC$ has dimension 13, with basis the set of monomials $x^ay^b$ where \[(a,b)\in\{(0, 0), (0, 1), (0, 2), (0, 3), (0, 8), (0, 10), (1, 0), (1, 1), (1, 2), (2, 0), (2, 1), (2, 8), (3, 0)\}.\] These exponent pairs are plotted in Figure \ref{4fig}.  In contrast, the comparable non-lifted one-point Hermitian code $C_{4,15}$ has dimension 10.  Thus the rate of $\cC$ is $\frac{13}{64}\approx 0.20$, while the rate of $C_{4,15}$ is $\frac{10}{64}\approx 0.16$.

\begin{figure}[h!]
\centering
  \includegraphics[width=.8\linewidth]{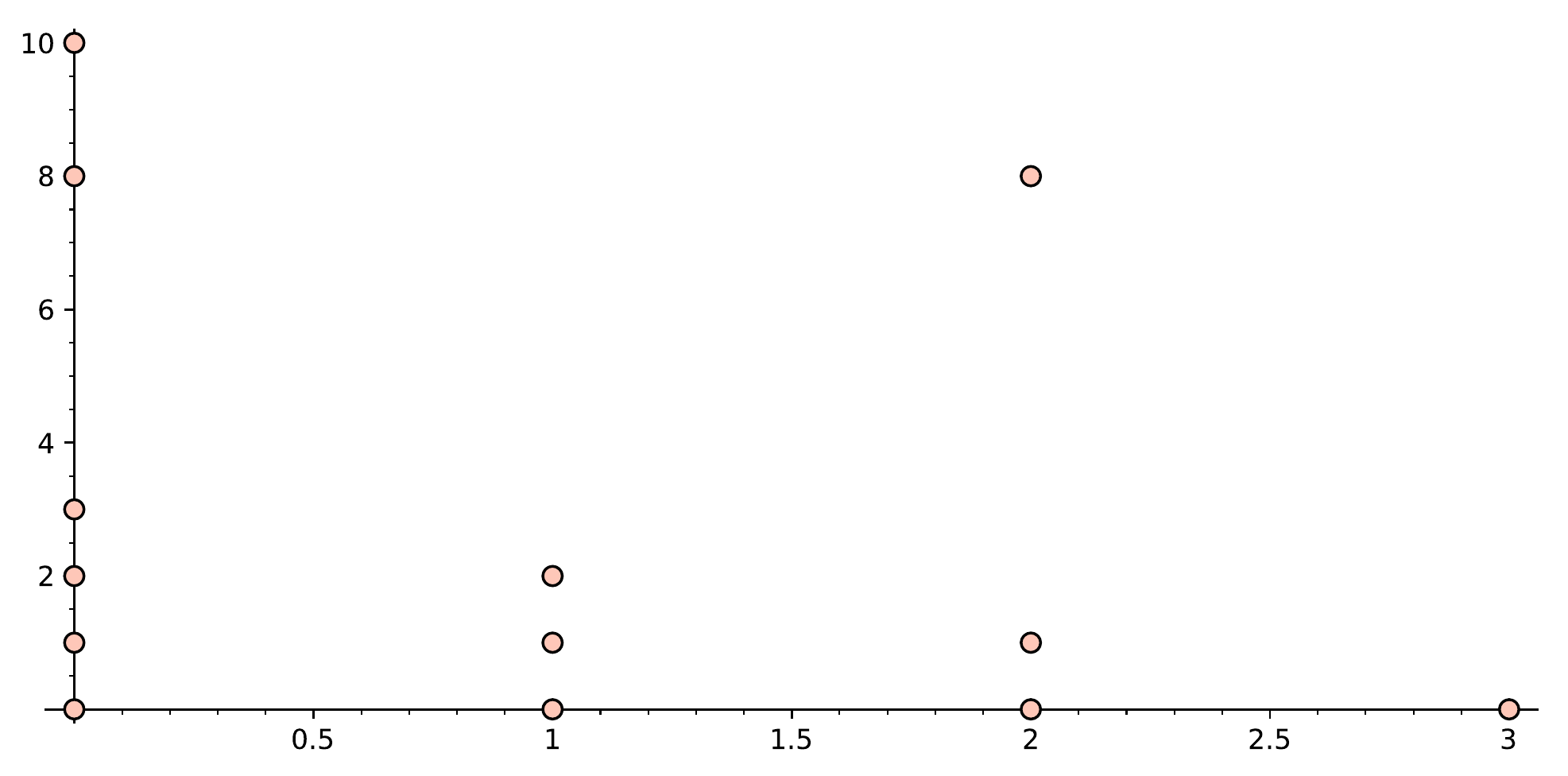}
  \caption{Exponent pairs $(a,b)$ with $x^ay^b\in \cC$ for $q=4$ ($a$ is on horizontal axis).}
  \label{4fig}
\end{figure}

\subsection{The code $\cC$ when $q=8$}

When $q=8$, we work with the Hermitian curve $x^8+x=y^9$, which has 513 points over $\mathbb{F}_{64}$ including one point at infinity, giving a code of length $n=512$.  The code $\cC$ has dimension 75, with basis the set of monomials $x^ay^b$ where 
\[(a,b)\in\{(0, 0), (0, 1), (0, 2), (0, 3), (0, 4), (0, 5), (0, 6), (0, 7), (0, 16), (0, 18), (0, 20), (0, 22), \]\[(0, 32), 
(0, 33), (0, 34), (0, 36), (0, 37), (0, 38), (0, 48), (0, 50), (0, 52), (0, 54), \] \[ (1, 0), (1, 1), (1, 2), (1, 3),
(1, 4), (1, 5), (1, 6), (1, 32), (1, 33), (1, 36), (1, 37),\]\[ (2, 0), (2, 1), (2, 2), (2, 3), (2, 4), (2, 5), (2, 16),
 (2, 18), (2, 20),\]\[ (2, 32), (2, 34), (2, 36), (2, 48), (2, 50), (2, 52),\] \[(3, 0), (3, 1), (3, 2), (3, 3), (3, 4), \] 
\[(4, 0), (4, 1), (4, 2), (4, 3), (4, 16), (4, 18), (4, 32), (4, 33), (4, 34), (4, 48), (4, 50),\]\[ (5, 0), (5, 1), 
(5, 2), (5, 32), (5, 33),\] \[(6, 0), (6, 1), (6, 16), (6, 32), (6, 48), (7, 0)\}.\]  These exponent pairs are plotted in Figure \ref{8fig}. In contrast, the comparable non-lifted one-point Hermitian code $C_{8,63}$ has dimension 36.  Thus the rate of $\cC$ is $\frac{75}{512}\approx 0.15$.  The rate of $C_{8,63}$ is $\frac{36}{512}\approx 0.07$.

\begin{figure}[h!]
\centering
  \includegraphics[width=.8\linewidth]{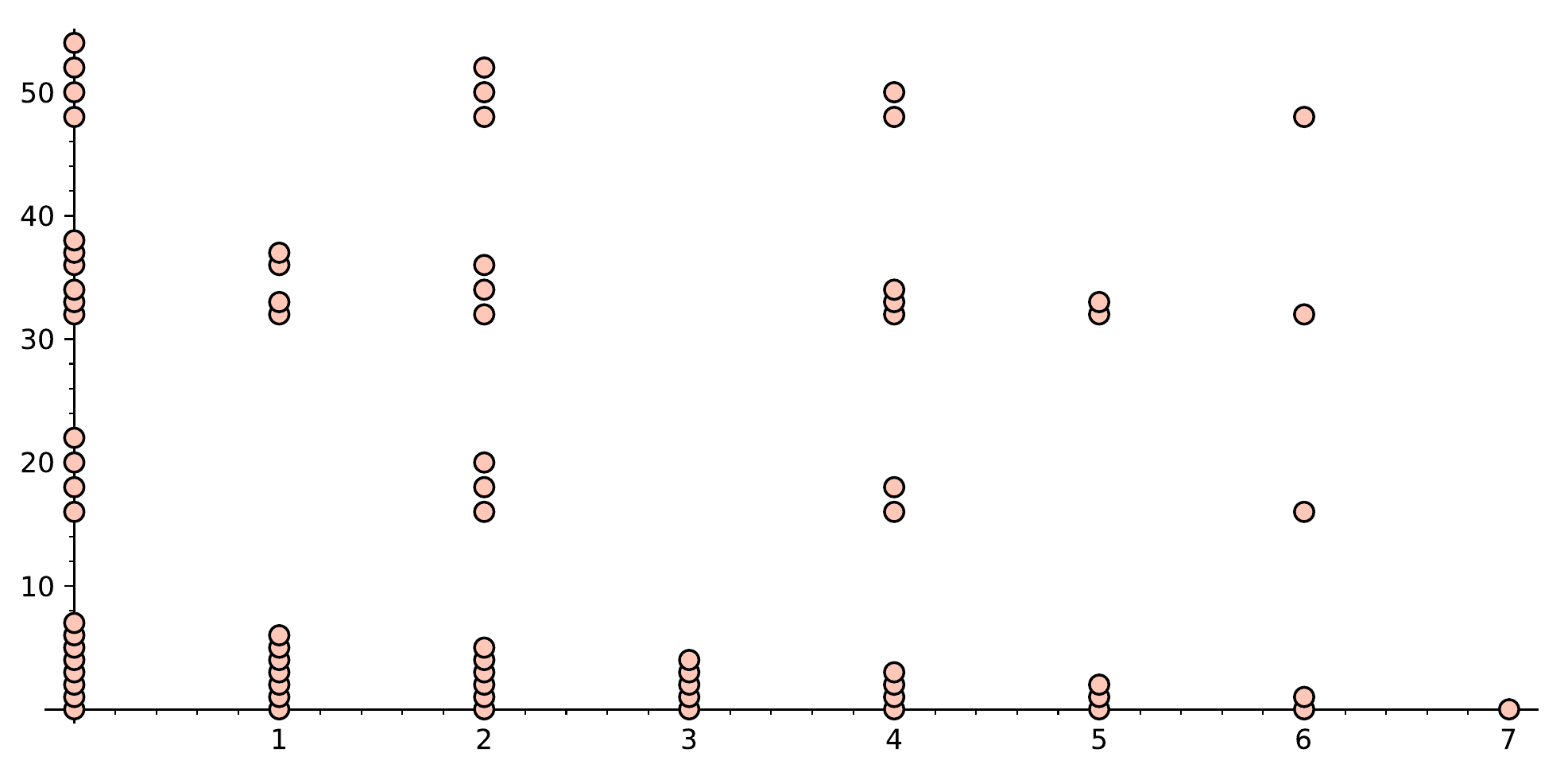}
  \caption{Exponent pairs $(a,b)$ with $x^ay^b\in \cC$ for $q=8$ ($a$ is on horizontal axis).}
  \label{8fig}
\end{figure}

\subsection{The code $\cC$ when $q=16$}

When $q=16$, we work with the Hermitian curve $x^{16}+x=y^{17}$, which has 4097 points over $\mathbb{F}_{256}$ including one point at infinity, giving a code of length $n=4096$.  The code $\cC$ has dimension 505, with basis the set of monomials $x^ay^b$ where $(a,b)$ are as depicted in Figure \ref{16fig}.  In contrast, the comparable non-lifted one-point Hermitian code $C_{16, 255}$ has dimension 136.  Thus the rate of $\cC$ is $\frac{505}{4096}\approx 0.123$.  The rate of $C_{16,255}$ is $\frac{136}{4096}\approx 0.033$.

\begin{figure}[h!]
\centering
  \includegraphics[width=.8\linewidth]{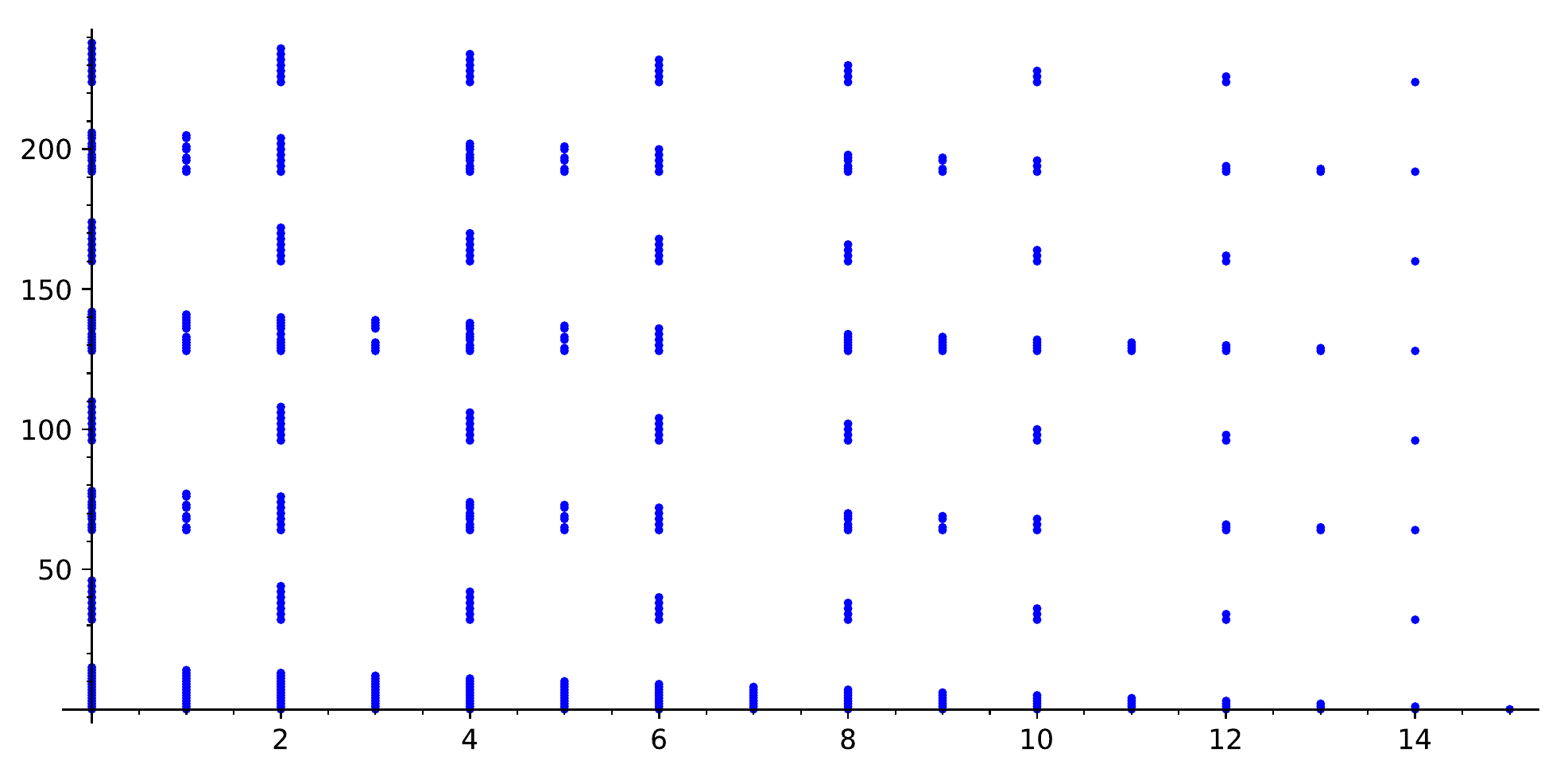}
  \caption{Exponent pairs $(a,b)$ with $x^ay^b\in \cC$ for $q=16$ ($a$ is on horizontal axis).}
  \label{16fig}
\end{figure}

\subsection{The code $\cC$ when $q=32$}

When $q=32$, we work with the Hermitian curve $x^{32}+x=y^{33}$, which has 32,769 points over $\mathbb{F}_{256}$ including one point at infinity, giving a code of length $n=32,768$.  The code $\cC$ has dimension 3675, with basis the set of monomials $x^ay^b$ where $(a,b)$ are as depicted in Figure \ref{32fig}.  In contrast, the comparable non-lifted one-point Hermitian code $C_{32, 1025}$ has dimension 528.  Thus the rate of $\cC$ is $\frac{3675}{32768}\approx 0.112$.  The rate of $C_{32,1025}$ is $\frac{528}{32768}\approx 0.016$.

\begin{figure}[h!]
\centering
  \includegraphics[width=.8\linewidth]{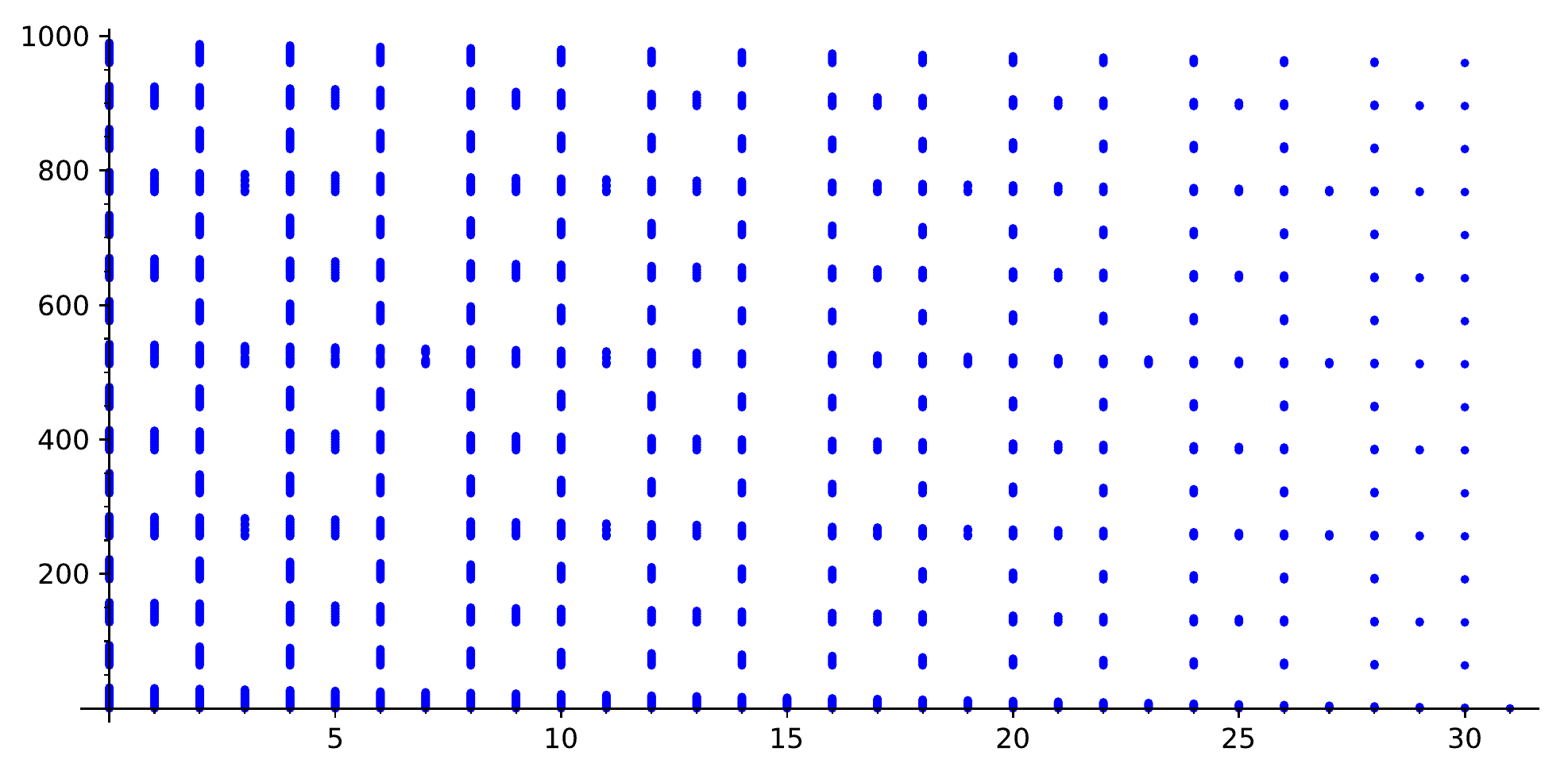}
  \caption{Exponent pairs $(a,b)$ with $x^ay^b\in \cC$ for $q=32$ ($a$ is on horizontal axis).}
  \label{32fig}
\end{figure}

\section{Conclusion}\label{sec:conclusion}

In this paper, we define \ourcodes, which are codes defined on the Hermitian curve
 with small locality, high availability, and rate bounded below by a constant. 
They are the evaluation code of polynomials whose restrictions to lines, intersected with the Hermitian curve, are all low-degree.  We study these codes as a first example of curve-lifted codes.  We establish the lower bound on the rate via a counting argument applied to certain ``good'' monomials.


We conclude with a few open questions.  First, it is an interesting question to completely characterize the ``good'' monomials for \ourcodes; determining their number would pin down the rate of these codes.
Second, it is interesting to explore other constructions of curve-lifted codes.  We view one of the main contributions of this work as introducing this paradigm for code constructions, and it is our hope that our construction and analysis of \ourcodes\ may serve as a prototype for  the construction and analysis of
other families of curve-lifted codes.

\section*{Acknowledgements} The authors thank AIM for hosting this collaboration through its SQuaREs program.

\bibliographystyle{plain}
\bibliography{refs}

\end{document}